\documentclass[10pt]{IEEEtran}
\usepackage{pgfplots}
\usepackage{gastex}

\newcommand{\zones}{\mathcal{Z}}

\usepackage{mathtools}
\DeclarePairedDelimiter{\fract}{\lbag}{\rbag}
\DeclarePairedDelimiter{\floor}{\lfloor}{\rfloor}

\usepackage{color}

\usepackage{tikz}
\usepackage{xcolor}
\usepackage{pgf}
\usepackage{pgflibraryarrows}
\usepackage{pgffor}
\usepackage{pgflibrarysnakes}
\usetikzlibrary{fit} 
\usetikzlibrary{backgrounds} 
\usetikzlibrary{patterns}

\tikzstyle{background}=[rectangle,fill=gray!10, inner sep=0.1cm, rounded corners=0mm]
\usepgflibrary{shapes}
\usetikzlibrary{snakes,automata}
\usetikzlibrary{shadows}

\tikzstyle{nloc}=[draw, text badly centered, rectangle, rounded corners, minimum size=2em,inner sep=0.5em]
\tikzstyle{loc}=[draw,rectangle,minimum size=1.4em,inner sep=0em]
\tikzstyle{trans}=[-latex, rounded corners]
\tikzstyle{trans2}=[-latex, dashed, rounded corners]

\usepackage{amsmath}
\usepackage{amsfonts}
\usepackage{amssymb}
\usepackage{hyperref}

\newcommand{\until}{\:\mathcal{U}}
\usepackage{graphicx}
\usepackage{hyperref}
\usepackage{verbatim}

\makeatletter
\newif\if@restonecol
\makeatother

\usepackage[ruled,noend,lined]{algorithm2e}

\newtheorem{theorem}{Theorem}
\newtheorem{example}{Example}
\newtheorem{definition}{Definition}

\newtheorem{proposition}[theorem]{Proposition}

\newenvironment{proof}{\noindent{\em Proof.}}{\qed \par \smallskip }
\newcommand{\qed}{{\nopagebreak\vspace{1ex}
\nopagebreak\hspace*{\fill}\mbox{$\rule{5pt}{5pt}$}}}

\usepackage{stmaryrd}

\newcommand{\sem}[1]{\llbracket#1\rrbracket}

\newcommand{\pplus}[1]{{\mathop{\oplus}_{#1}}}

\newcommand{\post}{\text{\textsc{Post}}}

\newcommand{\tot}[1]{\mathit{T}(#1)}

\newcommand{\Kk}{\mathcal{K}}
\newcommand{\Tt}{\mathcal{T}}
\newcommand{\TS}{\mathcal{T}}
\newcommand{\Hh}{\mathcal{H}}
\newcommand{\Mm}{\mathcal{M}}

\newcommand{\Rr}{\mathcal{R}}
\newcommand{\Aa}{\mathcal{A}}

\newcommand{\Cc}{\mathcal{C}}

\newcommand{\jJ}{\mathbb{J}}
\newcommand{\mM}{\mathbb{M}}

\newcommand{\set}[1]{\left\{ #1 \right\}}

\newcommand{\seq}[1]{\langle #1 \rangle}
\newcommand{\Rplus}{{\mathbb R}_{\geq 0}}

\newcommand{\Real}{\mathbb R}

\newcommand{\Int}{\mathbb{Z}}

\def\rmdef{\stackrel{\mbox{\rm {\tiny def}}}{=}} 

\newcommand{\done}{\textrm{done}}
\newcommand{\beg}{\textrm{begin}}
\newcommand{\fin}{\textrm{finish}}

\newcommand{\pred}{\textrm{pred}}

\newcommand{\rect}{\textrm{rect}}

\newcommand{\rst}{\textrm{reset}}

\newcommand{\Trace}{\mathsf{\it Trace}}

\newcommand{\norm}[1]{\|#1\|}

\title{Hybrid Automata for Formal Modeling and Verification of Cyber-Physical
  Systems}

\author{
  \IEEEauthorblockN{Shankara Narayanan Krishna and Ashutosh Trivedi}

  \IEEEauthorblockA{Department of Computer Science and Engineering\\ Indian
    Institute of Technology Bombay\\Mumbai, India 400076\\
    Email: \{krishnas,trivedi\}@cse.iitb.ac.in}
}

\begin{document}
\date{\today}
\maketitle

\begin{abstract}
The presence of a tight integration between the discrete control (the ``cyber'')
and the analog environment (the ``physical'')---via sensors and actuators over
wired or wireless communication networks---is the defining feature of
cyber-physical systems. 
Hence, the functional correctness of a cyber-physical system is crucially
dependent not only on the dynamics of the analog physical environment, but also
on the decisions taken by the discrete control that alter the dynamics of the
environment.  
The framework of \emph{Hybrid automata}---introduced by Alur, Courcoubetis,
Henzinger, and Ho---provides a formal modeling and specification environment to
analyze the interaction between the discrete and continuous parts of a
cyber-physical system.
Hybrid automata can be considered as generalizations of finite state automata
augmented with a finite set of real-valued variables whose dynamics in each state is
governed by a system of ordinary differential equations.
Moreover, the discrete transitions of hybrid automata are guarded by constraints
over the values of these real-valued variables, and enable discontinuous jumps
in the evolution of these variables. 
Considering the richness of the dynamics in a hybrid automaton, it is perhaps
not surprising that the fundamental verification questions, like reachability
and schedulability, for the general model are undecidable. 
In this article we present a review of hybrid automata as modeling and
verification framework for cyber-physical systems, and survey some of the key
results related to practical verification questions related to hybrid automata.
\end{abstract}

\section{Introduction}
\label{sec:introduction}
The term ``cyber-physical systems'' refers to any network of digital and analog
systems whose performance crucially depends on both the continuous dynamics of
the analog parts and the real-time switching decisions made by the digital system. 
A typical cyber-physical system may consist of several processors connected with
a set of physical systems via sensors and actuators over wired or wireless
communication networks. 
Such systems are increasingly playing safety-critical role in modern life, where
a fault in their design can be catastrophic.  

Modern cars are an important paradigmatic example of such safety-critical
cyber-physical systems.  
A modern premium car typically has 70 to 100 interconnected electronic control
units (ECUs) with dozens of sensors~\cite{BKPS07} performing various
functions~\cite{C13} like air-bag control,  cruise control, electronic stability
control, antilock brakes, engine ignition, windshield-wiper control,  engine
control, and collision-avoidance system.  
Many of these ECUs are connected with analog environment via sensors and
actuators, and are expected to perform their operations within hard time limits.
For instance, the air-bag ECU needs to respond within 20-30 millisecond after
the impact sensor connected to it detects a severe impact.  
As the  number of ECUs in a typical car is increasing and performing more
autonomously,  it is becoming increasingly difficult to ensure their
correctness.   
The severity of the problem can perhaps be best realized by looking into the
growing list of recalls~\cite{C13} by leading car companies due to
software-related problems.  
Some prominent examples include Toyota's  recall of 160,000 of its 2004/05 Prius
models because of a  software problem causing the car to suddenly stall,
Jaguar's 2011 recall of nearly 18,000 X-type cars  due to a software bug
resulting in driver's inability in turning off the cruise control,  and
Volkswagen's 2011 recall of about 4000 of its 2008 Passats models for
engine-control module software problem.   
The list is long and underscores the challenges in designing and verifying
safety-critical cyber-physical systems. 
Similar examples can also be cited for the cyber-physical system from other
domains such as avionics, implantable medical devices, transportation networks,
and energy sector.  

Formal modeling and verification of systems is the set of techniques that employ
rigorous mathematical reasoning to analyze properties of a system. 
In this article we concentrate on a celebrated~\cite{PK0,PK1} formal
verification framework known as \emph{model checking}~\cite{CGP99}.
Model Checking---pioneered by Clarke, Sifakis and Emerson~\cite{TM1}---is a
widely used automated technique that, given a formal description of a system and
a property, systematically checks whether this property holds for a given state
of the system model.  
The three key steps of this framework are the following:
\begin{enumerate}
\item \emph{formal modeling}: modeling a system under consideration using
  mathematically precise syntax that approximate a given system to a desired
  level of abstraction; 
\item \emph{formal specification}: specify the properties of the system using a 
  mathematically precise specification language (typically in formal logic); and 
\item \emph{formal analysis}: analyze the formal model with respect to the formal
  specification and report counter-example in case the system model violates the
  specification. 
\end{enumerate}
The success of the model checking framework in formal verification of systems
is largely due to it being highly automatic---a push-button
technology~\cite{CES09}---in comparison to other competing approaches like theorem proving. 
The counterexamples generated in the model-checking process often are used to
automatically refine---known as counterexample-guided abstraction refinement
(CEGAR)~\cite{CGHLV00,CFHKOST03} framework---the model and/or the property and
the entire procedure can  be repeated and thus removing the need of a very
accurate initial model or specification. 

Early research on formal modeling and verification of systems concentrated on
simplified models of the systems as  finite state-transition graphs.
Since these models are finite in nature, it is---in theory---possible to
exhaustively explore the state space of the system to verify the properties of
interest.  
However, the biggest challenge in model-checking of finite state-transition
graphs is so-called \emph{state-space explosion problem}~\cite{CGP99}
characterizing the exponential blowup in the number of states in the explicit
representation of the system where the system is naturally represented
succinctly using state variables, or as a composition of a network of interacting
finite state-transition graphs. 
In general, the state-space explosion problem renders the explicit exhaustive
exploration of the system intractable. 
However, a number of techniques have been proposed to overcome the state-space
explosion problem---including symmetry reduction~\cite{CAJS98}, partial-order
reduction~\cite{Peled94}, symbolic model checking~\cite{McMilan93} and bounded model
checking~\cite{BCCSZ03,BCCSZ09}---that has culminated into efficient and mature
tool support including SPIN~\cite{TSPIN} and NuSMV~\cite{TNuSMV} for finite state
model-checking.  
Examples of the use of finite-state model-checking in industry include the
verification of hardware circuits~\cite{Kur08}, communication~\cite{ADMP01} and
security~\cite{MCJ97,BCM11} protocols, and software device drivers~\cite{BLR11}.  

These finite state-transition graphs, however, often do not satisfactorily model
cyber-physical systems as they disregard the continuous dynamics of the physical
environment.
Alur and Dill~\cite{AD90} were the first one to propose a formal model, known as
timed automata, combining finite state-transition graphs with a finite set of
real-valued variables that evolve as time progresses while the system occupies
a state.  
In a timed automaton the real-valued variables---called clocks---simulate
perfect clocks as they evolve with a uniform constant speed (rate) and hence can
model asynchronous real-time systems interacting with a continuous physical
environment.       
The clock variables can be used to constrain the evolution of the system by
guarding the transitions of the graph, and can also be reseted at the time of
taking a transition to remember the time since that transition.
These capabilities make timed automata quite expressive formalism to define
real-time systems.
Moreover, the decidability\footnote{
The concept of decidability is a central one in computer science and it
characterizes the set of problems for which one can write computer programs that
always terminate with a correct answer.   
The problems for which it is not possible to write such a program are known as
\emph{undecidable} problems.  
A most famous undecidable problem is the halting problem (similar to
reachability problem) for the configurations of Turing machines (an abstract
model of computation capturing the notion of algorithmic computation).
} of key verification problems like reachability and
schedulability~\cite{AD90} and availability of mature verification tools---like
UPPAAL~\cite{UPPAAL,TUppaal}, Kronos~\cite{TKronos}, and RED~\cite{TRED}---make
timed automata an appealing tool for real-time system verification.

Alur, Courcoubetis, Henzinger, and Ho generalized the timed automata to hybrid
automata~\cite{ACHH93} to include real-valued variables with arbitrary dynamics
specified using ordinary differential equations. 
Considering the richness of dynamics of a hybrid automata, it is perhaps not
surprising that the fundamental verification questions like reachability are
undecidable for hybrid automata.
A number of subclasses of hybrid automata has been proposed with decidable
verification problems and some of the algorithms have been implemented as part
of tools like HyTech~\cite{THyTech} and PHAVer~\cite{Tphaver}.

Timed and hybrid automata provide an intuitive and semantically unambiguous way
to model cyber-physical systems, and a number of
case-studies~\cite{UP01,CJLRR09,LTS08,FTY11,Pas06,SMF97,JPM12} 
demonstrate their application for the analysis of cyber-physical systems.  
In this article we aim to provide a general introduction to verification using
hybrid automata as we focus on model-checking classical LTL logic~\cite{MP92}
over hybrid automata. 
To keep the discussion simple we do not cover other logics, for instance,
computation tree logic (CTL, CTL$^*$)~\cite{MP92,CGP99}, modal
$\mu$-calculus~\cite{Eme96}, and real-time and hybrid extensions of these
logics~\cite{AH92} including metric temporal logics (MTL~\cite{koy90,OW08}) and
duration calculus (DC)~\cite{GNRR93}.  

The goal of this article is to introduce key concepts for cyber-physical systems
modeling and verification using hybrid automata with a focus on LTL model-checking. 
In order to better focus our attention, we will not cover several useful
extensions of hybrid automata that capture certain natural aspects of modeling
hybrid systems, including  
\begin{itemize}
\item  
  game-theoretic extensions~\cite{AMPS98,AM99,dAFHMS03,ABM04,CHP08,BBJLR08}
  that allow the model to distinguish between controllable and
  uncontrollable non-determinism; 
\item 
  probabilistic extensions~\cite{KNSS99,Bea03,KNPS06,AB06,BF09,MLK10} that
  permit modeling of stochastic behavior arising due to, e.g.,  faulty or
  unreliable sensors or actuators, uncertainty in timing delays, and performance
  characteristics of (third-party) components; and
\item  
  priced extensions~\cite{LBBFHPR01,BBL08,RLS06,JJK06,Bou06} that permit
  modeling of resource consumption and payoffs associated with decisions.
\end{itemize}
We also restrict our attention to theoretical results regarding decidability of
LTL model-checking problems, and  do not cover data structures and
algorithms~\cite{HHW97,Frehse05,UPPAAL} for efficient implementation of these
results. 

We begin (Section~\ref{sec:definitions}) this survey  by introducing two
formalisms to model discrete and continuous dynamical systems, and then we
present hybrid automata model that combines features from these two models.  
Section~\ref{sec:verification} introduces syntax and semantics of linear
temporal logic (LTL) followed by a formal definition of corresponding
model-checking problem over a hybrid automata, and using two-counter Minsky
machines~\cite{Min67} we prove the in general LTL model-checking over hybrid
automata is undecidable.
In this section, we also introduce the idea of state-space reduction using a
well-established technique called \emph{quotienting} which we later exploit to
show decidability of model checking problem for some variants of hybrid
automata. 
We conclude the survey by discussing (Section~\ref{sec:decidability}) three key
subclasses of hybrid automata---timed automata, (initialized) rectangular hybrid
automata, and (two dimensional) piecewise-constant derivative systems---with
decidable model checking problem.   

\section{Hybrid Automata} 
\label{sec:definitions}
A \emph{dynamical system} is simply a system whose ``state'' evolves with
``time'' governed by a fixed set of rules or ``dynamics''.
The state of a dynamical system is specified as valuations of the variables of
interest in the system. 
Depending upon the nature of variables (discrete or continuous) and the notion
of time (discrete or continuous) the dynamics of variables can be specified by
differential equations or discrete assignments.  
For the purpose of this paper, we classify the dynamical systems into the
following three broad classes: i) \emph{discrete systems} where both
the notion of time and the variables are discrete, ii) \emph{continuous systems}
where the notion of time is continuous, while the variables are continuous, and
iii) \emph{hybrid systems} where some variables are continuous and some are
discrete, and although the notion of time is continuous, special
dynamic-changing events can happen at discrete instants.  
Notice that both discrete and continuous systems can be considered as subclasses
of hybrid systems.

On an abstract level any dynamical system can simply be represented as a graph
whose nodes represent the states and edges represent transition between the
states. 
Formally, a state transition graph can be defined in the following manner.
\begin{definition}[State Transition Graphs]
  A state transition graph is a tuple $\Tt = (S, S_0, \Sigma, \Delta)$
  where: 
  \begin{itemize}
  \item $S$ is a (potentially infinite) set of \emph{states}; 
  \item $S_0 \subseteq S$ is the set of \emph{initial states};
  \item $\Sigma$ is a (potentially infinite) set of \emph{actions}; and
  \item $\Delta \subseteq S \times \Sigma \times S$ is the \emph{transition
      relation}.
  \end{itemize}
  We say that a state transition graph $\Tt$ is finite (countable), if the sets
  $S$ and $\Sigma$ are finite (countable). 
\end{definition}
Given an action $a \in \Sigma$ and a state $s$ we write $\post(s,a)$ for the set
of states that are reachable from $s$ on $a$ and $\post(s)$ for the states
reachable in one step from $s$, i.e.
\begin{eqnarray*}
\post(s, a) &=& \set{s' \::\: (s,a,s') \in \Delta}\\
\post(s) &=& \bigcup_{a \in \Sigma} \post(s,a). 
\end{eqnarray*}
A run---an execution or a trajectory---of a dynamical system modeled as a state
transition graph $\Tt$ is a (finite or infinite) alternating sequence of states
and actions that begins with an initial state and all consecutive states are
connected with their predecessor via the transition relation. 
Formally, a finite run is a sequence $\seq{s_0, a_1, s_1, a_2, s_2, \ldots,
  s_n}$ such that  $s_0 \in S_0$ and for all $0 \leq i < n$ we have that
$s_{i+1}  \in \post(s_i, a_{i+1})$. 
An infinite run is defined analogously.

\begin{example}
  A graphical description of a state transition graph depicting a mod-4 counter
  with pause is shown in Figure~\ref{counter-sem}. 
  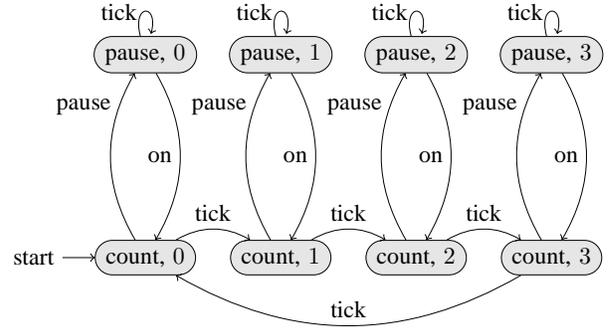
\begin{figure}[h]
\scalebox{0.9}{
  \begin{tikzpicture}
    \tikzstyle{every state}=[fill=gray!20!white,minimum size=0em,shape=rounded rectangle]
    
    \node[initial,state,fill=gray!20] (A) {count, $0$} ;
    \node[state, fill=gray!20] at (2,0) (B) {count, $1$} ;
    \node[state, fill=gray!20] at (4,0) (C) {count, $2$} ;
    \node[state, fill=gray!20] at (6,0) (D) {count, $3$} ;
    
    \node[state,fill=gray!20] at (0, 3)  (E) {pause, $0$} ;
    \node[state, fill=gray!20] at (2,3) (F) {pause, $1$} ;
    \node[state, fill=gray!20] at (4,3) (G) {pause, $2$} ;
    \node[state, fill=gray!20] at (6,3) (H) {pause, $3$} ;
    
    \path[->] (A) edge[bend left] node [above] {tick}  (B);
    \path[->] (A) edge[bend left] node [pos=0.8,left] {pause}  (E);
    \path[->] (E) edge[bend left] node [left] {on}  (A);
    \path[->] (E) edge[loop above] node [left] {tick}  (E);

    \path[->] (B) edge[bend left] node [above] {tick}  (C);
    \path[->] (B) edge[bend left] node [pos=0.8,left] {pause}  (F);
    \path[->] (F) edge[bend left] node [left] {on}  (B);
    \path[->] (F) edge[loop above] node [left] {tick}  (F);

    \path[->] (C) edge[bend left] node [above] {tick}  (D);
    \path[->] (C) edge[bend left] node [pos=0.8,left] {pause}  (G);
    \path[->] (G) edge[bend left] node [left] {on}  (C);
    \path[->] (G) edge[loop above] node [left] {tick}  (G);

    \path[->] (D) edge[bend left] node [above] {tick}  (A);
    \path[->] (D) edge[bend left] node [pos=0.8,left] {pause}  (H);
    \path[->] (H) edge[bend left] node [left] {on}  (D);
    \path[->] (H) edge[loop above] node [left] {tick}  (H);
  \end{tikzpicture}
}
\caption{State transition graph for a mod-4 counter.} 
\label{counter-sem}
\end{figure}
  We represent a state using a rounded rectangle and a transition using a
  labeled edge between participating states. 
  An initial state is marked using an incoming arrow to that state labeled
  ``start''. 
  An example of a run is the finite sequence: 
  \begin{eqnarray*}
    \seq{\text{(count, 0), tick, (count,1), pause, (pause, 1), tick,} \\
    \text{(pause, 1), on, (count, 1), tick, (count, 2)}}.
\end{eqnarray*}
\end{example}

A state transition graph is a feasible way to represent and computationally
analyze dynamical systems with finitely many states. 
However, to enable computational analysis of a general infinite state dynamical
system we need a finitary way to represent a potentially infinite space of states.
We begin this section by introducing concepts and notation used throughout this
article, followed by  discussing such syntactical models to represent
purely discrete and purely continuous dynamical system.
After introducing these models we present hybrid automata capable of modeling
hybrid dynamical systems.

\subsection*{Variables and Predicates}
Let $\Real$ be the set of real numbers, $\Rplus$ be the set of non-negative
real numbers, and $\Int$ be the set of integers. 

Let $X$ be a set of real-valued variables. 
A \emph{valuation} on $X$ is a function $\nu : X {\to} \Real$ and we write $V(X)$
for the set of valuations on $X$.
Abusing notation, we also treat a valuation $\nu$ as a point in $\Real^n$ that
is equipped with the standard \emph{Euclidean norm} $\norm{{\cdot}}$ where $n$ is
the cardinality of $X$.

We define a predicate over a set $X$ as a subset of $\Real^{|X|}$. 
For efficient computer-readable representation of predicates we often define
them using non-linear algebraic equations involving $X$. 
We write $\pred(X)$ for the set of predicates over $X$.
For a predicate $\pi \in \pred(X)$ we write $\sem{\pi}$ for the set of 
valuations in $\Real^{|X|}$ satisfying the equation $\pi$. 
We write $\top$ for the predicate that is true for all valuations, while $\bot$
for the predicate which is false for all the valuations. 

\begin{example}
An example of a predicate over the variables $\ddot{\theta}$ and $\theta$  is
\[ 
m\ell\ddot{\theta} = -m g \sin(\theta),
\]
characterizing the motion of an idealized pendulum (Figure~\ref{pendulum}) 
where $\theta$ is the angle the pendulum forms with its rest position,
$\ddot{\theta}$ is second derivative of $\theta$, $m$ is the mass of the
pendulum, $g$ is the gravitational constant, and $\ell$ is the length of the
pendulum.  
\end{example}

We say that a predicate $P$ is \emph{polyhedral} if it is defined as the conjunction
of a finite set of linear constraints of the form 
$ a_1 x_1 + \dots + a_n x_n \bowtie k$, where $k \in \Int$, for all $1 \leq i \leq
n$ we have that $a_i \in \Real, x_i \in X$, and $\bowtie \in \{<,\leq, =, >,
\geq\}$.  
An example of a polyhedral predicate over the set $\set{x, y, x}$ is
$2x+3y-9z\leq 5$.
We define an \emph{octagonal} predicate as the conjunction of a
finite set of linear constraints over $X$ of the form $\pm x \pm y \bowtie k \text{ or }
x \bowtie k$, where $k \in \Real$, $x, y \in X$. 
Similarly a \emph{rectangular} predicate is defined as the conjunction of a
finite set of linear constraints over $X$ of the form $x \bowtie k$, where 
$k \in \Real$, and $x \in X$.

\subsection{Discrete Dynamical Systems}
Discrete dynamical systems can be conveniently modeled as \emph{extended finite state
  machines} having finitely many modes (or modes) and transitions between these
modes. 
The values of variables remain unchanged while the system is in some mode,
and changes only when a transition takes place where they can ``jump'' to new
values assigned by the transition.
These jumps are specified using \emph{predicates} over the set $X {\cup} X'$ that
relates the current values of variables of system, given as the set $X$, to the
values in the next time-step, given as the set $X'$ of primed-versions of variables
in $X$.
Transitions are often guarded by \emph{predicates} over variables specifying the
enabledness condition of the transition. 
Starting from some initial valuation to the variables, a system modeled using an
extended finite state machine evolves in discrete time-steps.
At each discrete step the system can take any enabled transition, i.e. satisfied
by the current variable valuation, and after executing the transition the
valuation of the variables is changed according to the jump condition.  
The system continues evolving in this fashion forever.
An extended finite state machine is formally defined as the following. 
\begin{definition}[Extended Finite State Machines: Syntax]
  An \emph{extended finite state machine} is a tuple 
  $\Mm = (M, M_0, \Sigma, X, \Delta, I, V_0)$ such that: 
  \begin{itemize}
  \item $M$ is a finite set of control \emph{modes} including a distinguished
    initial set of control modes $M_0 \subseteq M$, 
  \item $\Sigma$ is a finite set of \emph{actions},
  \item $X$ is a finite set of real-valued \emph{variable},
  \item $\Delta \subseteq M \times \pred(X) \times \Sigma \times \pred(X\cup X')
    \times M$ is the \emph{transition relation},  
  \item $I: M \to \pred(X)$  is the mode-invariant function, and
  \item $V_0 \in \pred(X)$ is the set of initial valuations.
  \end{itemize}
\end{definition}
For a transition $\delta= (m, g, a, j, m') \in \Delta$ we refer to $m \in M$ as
its \emph{source mode}, $g \in \pred(X)$ as its \emph{guard}, $a \in A$ as its
\emph{action}, $j \in \pred(X{\cup}X')$ as its \emph{jump constraint}, and
$m'\in M$ as the \emph{target mode}.    

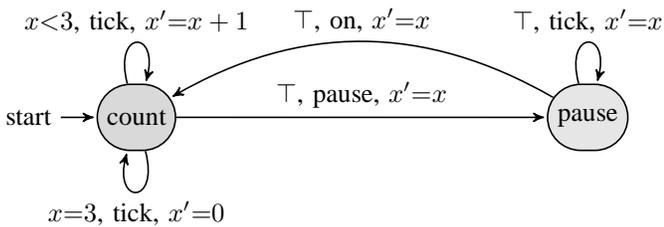
\begin{figure}[b]
\begin{center}
  \begin{tikzpicture}[->,>=stealth',shorten >=1pt,auto,node distance=2.8cm,
    semithick]
    \tikzstyle{every state}=[fill=gray!20!white,shape=rounded rectangle]
    
    \node[initial,state,fill=gray!30] (A) {count} ;
    
    \node[state,fill=gray!20] at (6,0) (B) {pause} ;
    
    \path (A) edge node [above]{$\top$, pause, $x'{=}x$} (B);
    \path (A) edge[loop above] node [above]{$x {<} 3$, tick, ${x'{=}x+1}$} (A);
    \path (A) edge[loop below] node [below]{$x {=} 3$, tick, ${x'{=}0}$} (A);
    \path (B) edge [bend right] node [above]{$\top$, on, $x'{=}x$} (A);
    \path (B) edge[loop above] node [above]{$\top$, tick, ${x'{=}x}$} (B);
  \end{tikzpicture}
  \caption{An EFSM description of a mod-4 counter with reset and pause.} 
  \label{counter}
\end{center}
\end{figure}
A configuration of an extended finite state machine is a tuple $(m, \nu)$ where $m$ is a control
mode and $\nu$ is a valuation of variables in $X$.
The execution of an  extended finite state machine begins in a configuration $(m_0, \nu_0)$ such that
the control mode $m_0 \in M_0$ is in the set of initial control modes and the
valuation $\nu_0 \in V_0$ satisfies the invariant of mode $m_0$, i.e. $\nu_0 \in
\sem{I(m_0)}$. 
At each discrete time-step the system executes a transition $(m, g, a, j, m')$
that is enabled in the current configuration $(m, \nu)$, i.e., $\nu \in
\sem{g}$, and the configuration of the system jumps to a new
configuration $(m', \nu')$ while respecting the jump constraints, i.e.  
$(\nu, \nu') \in \sem{j}$ as well as the invariant condition of the
resulting mode $\nu' \in \sem{I(m')}$.   
The system continues its execution from the resulting configuration in the
similar fashion. 
Hence, we can define the semantics of an extended finite state machine as a
state transition graph in the following manner. 
\begin{definition}[Extended Finite State Machine: Semantics]
  The semantics of an extended finite state machine $\Mm = (M, M_0, \Sigma, X,
  \Delta, I, V_0)$ is given as a state transition graph $T^\Mm = (S^\Mm,
  S_0^\Mm, \Sigma^\Mm, \Delta^\Mm)$ where:  
  \begin{itemize}
  \item $S^\Mm \subseteq (M \times \Real^{|X|})$ is the set of configurations of
    $\Mm$ such that for all $(m, \nu) \in S^\Mm$ we have that $\nu \in
    \sem{I(m)}$; 
  \item $S_0^\Mm \subseteq S^\Mm$ such that $(m, \nu) \in S^\Mm$ if $m \in M_0$
    and $\nu \in V_0$;
  \item $\Sigma^\Mm = \Sigma$ is the set of labels;
  \item $\Delta^\Mm \subseteq S^\Mm \times \Sigma^\Mm \times S^\Mm$ is the set of
    transitions such that $((m, \nu), a, (m', \nu')) \in \Delta^\Mm$ if there
    exists a transition $\delta = (m, g, a, j, m') \in \Delta$ such that
    the current valuation $\nu$ satisfies the guard of $\delta$, i.e. $\nu \in
    \sem{g}$; 
    the pair of current and next valuations $(\nu, \nu')$ satisfies the jump
    constraint of $\delta$, i.e. $(\nu, \nu') \in \sem{j}$; and 
    the next valuation satisfies the invariant of the target mode of $\delta$,
    i.e. $\nu' \in \sem{I(m')}$. 
  \end{itemize}
\end{definition}

Let us consider an example of the syntax and semantics of an extended finite
state machine. 
\begin{example}[Modulo-4 counter]
  Let us consider a modulo-4 counter with reset and pause functionality shown in
  Figure~\ref{counter}.
  This extended finite state machine $\Mm = (M, M_0, \Sigma, X, \Delta, I, V_0)$ has two control modes
  $M = \set{\text{count}, \text{pause}}$ with count being the initial mode.
  The variable $x$ is the only variable, while the set of action is $\Sigma =
  \set{tick, on, pause}$ where tick, on, and pause stand for clock-tick,
  start-counting, and pause-counting actions, respectively.
  While drawing an  extended finite state machine, we depict modes by rounded rectangles and transitions
  by arrows connecting the modes labeled by a triplet $(g, a, j)$
  showing the guard, the action, and the jump predicate of the transition. 
  For example the transition  $(\text{count}, x=3, t, x'=0, \text{count})$ is
  shown in the Figure~\ref{counter} as a self-loop labeled with $(x=1,t, x'=0)$
  on the mode labeled count. 
  It is straightforward to see that the  extended finite state machine in Figure~\ref{counter} models a
  modulo-4 counter with reset and pause.
  The corresponding state transition graph is shown in the
  Figure~\ref{counter-sem}. 
\end{example}
In the rest of the article, to minimize clutter, we will omit the guard if it is
the predicate $\top$, and we  omit the jump predicates specifying that the value
of a variable remains unchanged, i.e. predicates of the form $x' = x$.

\subsection{Continuous Dynamical Systems}
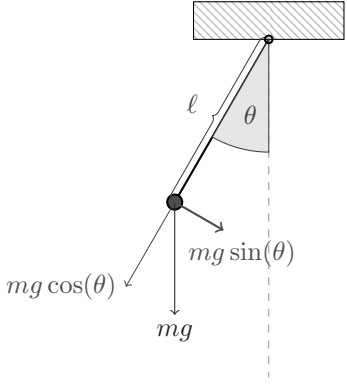
\begin{figure}[t]
  \begin{center}
    \begin{tikzpicture}[scale=0.5]
  
  \draw[color=black!40, dashed] (0,0)--(0,-9);

  \pgftransformrotate{-30}
  \draw[color=black, thick] (0,0)--(0,-5);

  \pgftransformrotate{30}
  \filldraw[pattern=north west lines, pattern color=gray!60] (-2, 0) rectangle (2,1);
  
  \filldraw[fill=gray!20,draw=gray!50!black] (0,0) -- (0,-3) arc (-90:-120:3) -- cycle;
  \node[color=gray!50!black] at (-.49,-2) {$\theta$};
  \pgftransformrotate{-30}
  \draw[color=black,fill=black!70, thick] (0,-5) circle (.2cm);
  \pgftransformrotate{30}
  \draw[->, color=gray!30!black] (-2.5,-4.33)--(-2.5,-7.33)
  node[below] {$mg$};
  \pgftransformrotate{-30}
  \draw[->,color=black!70, thick] (0,-5)--(1.5,-5) node[below] {$~~~~mg\sin(\theta)$};
  \draw[->,color=black!70] (0,-5)--(0,-7.6) node[left] {$mg\cos(\theta)$};
  \draw[snake=brace, color=gray!50!black] (-.1,-5)--(-.1,0);
  \node[color=gray!50!black] at (-.9,-2.5) {$\ell$};
  \draw[color=black, thick] (0,0) circle (.1cm);
\end{tikzpicture}
\caption{An idealized pendulum with length $\ell$ and mass $m$.} 
\label{pendulum}
\end{center}
\end{figure}
 
For the purpose of this article, a continuous dynamical system is a finite set of
continuous variables along with a set of ordinary differential equations
characterizing the dynamics or the flow of these variables as a function of
time. 
We represent the flow of a continuous dynamical system using a flow function 
$F: \Real^{|X|} \to \Real^{|X|}$ characterizing the system of ordinary
differential equations:
\begin{equation}
\dot{X} = F(X)~\label{ode1}
\end{equation}
where, following Newton's dot notation for differentiation, $\dot{X}$ represents
the set of first-order derivatives of the variables in the set $X$.
Information about the higher-order derivatives can be represented using only
first-order derivatives introducing auxiliary variables. 
For example the second-order differential equation $\ddot{\theta} +
(g/\ell)\sin(\theta) = 0$  can be written as a system of first-order
differential equations $\dot{\theta} = y, \dot{y} = -(g/\ell)\sin(\theta)$. 
Formally, a continuous dynamical system is defined in the following manner. 

\begin{definition}[Continuous Dynamical System] 
  A \emph{continuous dynamical system} is a tuple $\Mm = (X, F, \nu_0)$ such that
  \begin{itemize}
  \item $X$ is a finite set of real-valued \emph{variable}, 
  \item $F: \Real^{|X|} \to \Real^{|X|}$ is the flow function characterizing the
    the set of ordinary differential equation $\dot{X} = F(X)$, and
  \item $\nu_0 \in \Real^{|X|}$ is the initial valuation.
  \end{itemize}
\end{definition}

A \emph{run} of a continuous dynamical system $\Mm {=} (X,
F, \nu_0)$ is given as a solution to the system of differential equations
(\ref{ode1}) with initial valuation $\nu_0$.
Let  a differentiable function ${f: \Rplus {\to} \Real^{|X|}}$ be a
solution to (\ref{ode1}), that provides the valuations of the variables as a
function of time, such that:
\begin{eqnarray*}
  f(0) &=& \nu_0 \\
  \dot{f}(t) &=& F(f(t)) \text{ for every $t \in \Rplus$,}
\end{eqnarray*}
where $\dot{f}: \Rplus {\to} \Real{|X|}$ is the time derivative of the function
$f$.  
We call such a function $f$ a run of the continuous dynamical system $\Mm$.
Since, in general, a solution of (\ref{ode1}) may not exist or may not be unique,
a run of a continuous dynamical system may not exist or may not be
unique~\cite{LTS08}.   
To ensure the existence and the uniqueness of the run we enforce
Lipschitz-continuity\footnote{We say that a function ${F: \Real^n \to
    \Real^n}$ is Lipschitz-continuous if there exists a constant $K{ >} 0$,
  called the Lipschitz constant, such that for all $x, y \in \Real^n$ we have
  that $\|F(x) - F(y)\| < K \|x-y\|$.}  assumption on $F$.
The following result states the existence and uniqueness of the set of
ordinary differential equations (\ref{ode1}) under Lipschitz-continuity
  assumption. 
\begin{theorem}[Picard-Lindel\"of Theorem~\cite{Roy09}]
  If a function $F: \Real^{|X|} \to \Real^{|X|}$ is Lipschitz-continuous then the
  differential equation $\dot{X} {=} F(X)$ with initial valuation $\nu_0 {\in}
  \Real^{|X|}$ has a unique solution $f: \Rplus {\to} \Real^{|X|}$ for all  $\nu_0
  {\in} \Real^{|X|}$. 
\end{theorem}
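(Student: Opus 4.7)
The plan is to recast the initial value problem as an integral equation and then apply the Banach fixed-point theorem (contraction mapping principle). First I would observe that by the fundamental theorem of calculus a continuously differentiable function $f : \Rplus \to \Real^{|X|}$ satisfies $\dot{f}(t) = F(f(t))$ with $f(0) = \nu_0$ if and only if it satisfies the integral equation
\[
f(t) = \nu_0 + \int_0^t F(f(s))\,ds,
\]
for every $t \geq 0$. This equivalence is useful because the right-hand side defines an operator on a space of merely continuous functions, and any continuous fixed point is automatically differentiable since $F\circ f$ is continuous.

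Next, for a fixed $T > 0$, I would define the operator
\[
\Phi : C([0,T],\Real^{|X|}) \to C([0,T],\Real^{|X|}), \qquad (\Phi g)(t) = \nu_0 + \int_0^t F(g(s))\,ds,
\]
so that fixed points of $\Phi$ correspond exactly to solutions of the integral equation on $[0,T]$. Using the Lipschitz hypothesis with constant $K$, a direct estimate gives $\|(\Phi g_1)(t) - (\Phi g_2)(t)\| \leq K \int_0^t \|g_1(s) - g_2(s)\|\,ds$. Rather than restrict $T$ to ensure $TK < 1$ with the supremum norm, I would equip $C([0,T],\Real^{|X|})$ with the Bielecki norm $\|g\|_\alpha = \sup_{t\in[0,T]} e^{-\alpha t}\|g(t)\|$ for some $\alpha > K$. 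A short computation shows that with this norm $\Phi$ is a contraction with ratio $K/\alpha < 1$ on \emph{any} interval $[0,T]$, independently of $T$.

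I would then invoke the Banach fixed-point theorem on the complete metric space $(C([0,T],\Real^{|X|}), \|\cdot\|_\alpha)$ to conclude the existence of a unique fixed point $f_T$ on $[0,T]$. Uniqueness of the fixed point for every $T$ ensures that the solutions on different intervals agree on their common domain, so they glue together into a unique continuous function $f : \Rplus \to \Real^{|X|}$ solving the integral equation, which is therefore the unique solution of the ODE with the prescribed initial value.

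The main obstacle I would anticipate is obtaining \emph{global} (rather than merely local) existence. The classical Picard iteration only yields existence on a small interval whose length depends on a local bound for $F$, and extending such a local solution to all of $\Rplus$ typically requires an a priori bound ruling out blow-up in finite time. The Bielecki-norm trick is the cleanest way around this, since it exploits global Lipschitz continuity (which here implies at most linear growth of $F$) to make $\Phi$ a contraction on every finite interval in one go, avoiding any delicate maximal-interval-of-existence argument.
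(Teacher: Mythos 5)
Your proposal is correct, but note that the paper itself offers no proof of this statement: it is imported as a classical result with a citation to a textbook, so there is no in-paper argument to compare against. Judged on its own, your argument is the standard and complete proof of the \emph{global} Picard--Lindel\"of theorem. The reduction to the integral equation $f(t) = \nu_0 + \int_0^t F(f(s))\,ds$ is valid in both directions (a continuous fixed point is automatically $C^1$ because $F\circ f$ is continuous), the Bielecki-norm estimate $\|\Phi g_1 - \Phi g_2\|_\alpha \leq (K/\alpha)\,\|g_1-g_2\|_\alpha$ does make $\Phi$ a contraction on every $[0,T]$ once $\alpha > K$, the weighted norm is equivalent to the supremum norm on a compact interval so completeness is preserved, and uniqueness of the fixed point on each $[0,T]$ lets the solutions glue into a single $f:\Rplus\to\Real^{|X|}$. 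You are also right to flag global existence as the one point where a naive local Picard iteration would fall short of what the theorem claims ($f$ defined on all of $\Rplus$); the Bielecki trick, or alternatively iterating the standard local argument using the uniform Lipschitz constant to get a step size bounded away from zero, closes that gap. The only cosmetic mismatch is that the paper's footnote states the Lipschitz condition with a strict inequality $\|F(x)-F(y)\| < K\|x-y\|$, which is vacuously violated at $x=y$; your proof correctly uses the standard non-strict form, which is clearly what is intended.
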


In addition, Lipschitz-continuity offers the following advantage while numerically
simulating an approximate solution to the differential equations~(\ref{ode1}).

\begin{theorem}[Stability wrt initial valuation~\cite{LTS08}]
  Let $F$ be a Lipschitz-continuous function with constant $K {>}0$ and let $f{:}
  \Rplus {\to} \Real^{|X|}$  and $f'{:}\Rplus {\to} \Real^{|X|}$ be solutions to the
  differential equation $\dot{X} {=} F(X)$ with initial valuation $\nu_0 {\in}
  \Real^{|X|}$ and $\nu_0' {\in} \Real^{|X|}$, respectively.
  Then, for all $t {\in} \Rplus$ we have that 
  $\|f(t) {-} f'(t)\| \leq \|\nu{-}\nu_0\|e^{Kt}$.
\end{theorem}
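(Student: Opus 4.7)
The plan is to reduce this to Grönwall's inequality by first rewriting the two solutions in integral form. Since $f$ and $f'$ are differentiable solutions of $\dot{X} = F(X)$, the fundamental theorem of calculus gives
\begin{equation*}
f(t) = \nu_0 + \int_0^t F(f(s))\, ds, \qquad f'(t) = \nu_0' + \int_0^t F(f'(s))\, ds.
\end{equation*}
Subtracting and taking norms, I would use the triangle inequality for integrals together with the Lipschitz assumption $\|F(x) - F(y)\| \leq K \|x - y\|$ to derive
\begin{equation*}
\|f(t) - f'(t)\| \;\leq\; \|\nu_0 - \nu_0'\| + K\int_0^t \|f(s) - f'(s)\|\, ds.
\end{equation*}

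Next I would invoke Grönwall's inequality in its standard integral form: if a nonnegative continuous function $u$ satisfies $u(t) \leq a + K\int_0^t u(s)\, ds$ for a constant $a \geq 0$, then $u(t) \leq a\, e^{Kt}$. Setting $u(t) = \|f(t) - f'(t)\|$ and $a = \|\nu_0 - \nu_0'\|$ yields precisely the desired bound $\|f(t) - f'(t)\| \leq \|\nu_0 - \nu_0'\| e^{Kt}$. Since Grönwall's lemma is standard, I would either cite it from a textbook reference or include a short side lemma proving it (e.g., by noting that the function $\varphi(t) = e^{-Kt}\bigl(a + K\int_0^t u(s)\, ds\bigr)$ has nonpositive derivative).

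The only mild obstacle is the appeal to Grönwall's inequality: depending on the intended level of self-containedness, the paper may prefer an in-line derivation. An alternative one-step argument is to consider $\varphi(t) = a + K\int_0^t u(s)\, ds$ directly; then $\dot{\varphi}(t) = K u(t) \leq K \varphi(t)$, so $\frac{d}{dt}\bigl(e^{-Kt}\varphi(t)\bigr) \leq 0$, giving $\varphi(t) \leq \varphi(0)\, e^{Kt} = a\, e^{Kt}$, and hence $u(t) \leq \varphi(t) \leq a\, e^{Kt}$. This keeps the proof fully elementary and avoids depending on a named external lemma.
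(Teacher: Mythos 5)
Your proof is correct, and it is the standard argument: write both solutions in integral form, apply the Lipschitz bound under the integral, and close with Gr\"onwall's inequality (which you also justify elementarily, so nothing is left dangling). The paper itself gives no proof of this theorem --- it is stated with a citation to an external reference --- so there is no in-text argument to compare against; your derivation is exactly what that citation stands in for. One small point worth noting: the bound as printed in the statement reads $\|\nu - \nu_0\|e^{Kt}$, which is a typo for $\|\nu_0 - \nu_0'\|e^{Kt}$; your proof correctly establishes the intended inequality.
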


This theorem implies that, under Lipschitz-continuous assumption on the flow
function $F$, any two runs whose initial valuation is close to one-another
remain close as the time progresses. 
Since it is not always possible to analytically solve differential equations,
this property permits us to numerically simulate the behaviour of continuous
dynamical system using approximation methods, e.g. Euler's method or
Runge-Kutta method, that are readily available in tools such as
Matlab~\cite{Matlab} and Mathematica~\cite{Mathematica}.   
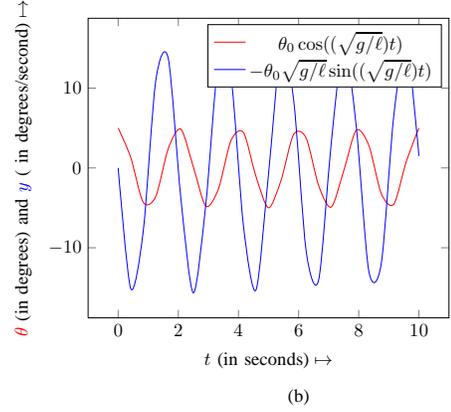
\begin{figure}
\begin{center}
  \scalebox{0.7}{
\begin{tikzpicture}
  \begin{axis}[domain=0:10, 
    xlabel=$t \text{ (in seconds)} \mapsto $,
    ylabel=${{\color{red}\theta} \text{ (in degrees)} \text{ and }
      {\color{blue} y} \text{ ( in degrees/second)}} \mapsto$
    ]
     \addplot[smooth,red] {((5)*cos(deg(sqrt(9.81)*x)))}; 
      \addplot[smooth,blue] {(-(sqrt(9.81)*(5)*sin(deg(sqrt(9.81)*x))))}; 
      \legend{$\theta_0\cos((\sqrt{g/\ell})t)$,$
        -\theta_0\sqrt{g/\ell}\sin((\sqrt{g/\ell})t)$};
    \end{axis}
    \node at (4, -1.5) {(b)};
\end{tikzpicture}
}
\caption{The variables $\theta$ (angle displacement) and $y$ (angular velocity) are
  plotted with respect to the time for a pendulum with $\ell=1$ meter with
  $\theta_0 = 5$ degrees.} 
\label{pendulum-phase}
\end{center}
\end{figure}
\begin{example}[Simple Pendulum]
  Consider a simple pendulum shown in Figure~\ref{pendulum} and its the
  motion equations: 
  \begin{eqnarray*}
    \dot{\theta} &=& y,\\
    \dot{y} &=& -(g/\ell)\sin(\theta),
  \end{eqnarray*}
  with initial valuations $(\theta, y) = (\theta_0, 0)$.
  To analytically solve these equations let us assume small enough angular
  displacement $\theta$ and $\sin(\theta) \approx \theta$.
  Now the equations simplify to  
  \[
  \dot{\theta} = y \text{ and } \dot{y} = -(g/\ell)\theta. 
  \]
  Hence our continuous dynamical system is $\Mm = (X, F, \nu_0)$ where $X  =
  \set {\theta, y}$, $F$ is  such that $F(\dot{\theta}) = y$ and $F(\dot{y}) =
  -(g/\ell)\theta$ and $\nu_0 = (\theta_0, 0)$.
  The solution for these differential equations is 
  \begin{eqnarray*}
    \theta(t) &=& A \cos (Kt) + B \sin (Kt)\\
    y(t) &=& - AK  \sin (Kt) + BK \cos (Kt),\\
  \end{eqnarray*}
  where $K = \sqrt{g/\ell}$.
  Substituting $\theta(0) = \theta_0$ and $y(0)= 0$ from the initial valuation,
  we get that $A = \theta_0$ and $B = 0$.
  Hence the unique run of the pendulum system can be given as the function $f:
  \Rplus \to \set{\theta, y}$ as $ t \mapsto (\theta_0 \cos(Kt), - \theta_0 K \sin(Kt))$.
  Figure~\ref{pendulum-phase} shows the change in valuations of the variables
  $\theta$ and $y$ as a function of time. 
\end{example}

\subsection{Hybrid Dynamical Systems}
\begin{figure*}[t]
  \begin{center}
   \begin{tikzpicture}[scale=0.69]
      \begin{axis}[
        xlabel=$t$,
        ylabel=$X$]       ]
        \addplot[dashed,mark=*,blue] plot coordinates 
        {
          (0,2)
          (1, 3)
          (2, 4)
          (3, 3)
          (4, 2)
          (5, 1)
        };
        \addlegendentry{Discrete System}
      \end{axis}
    \end{tikzpicture}
    \begin{tikzpicture}[scale=0.69]
      \begin{axis}[
        domain=0:10,
        xlabel=$t$,
        ylabel=$X$]
        \addplot[smooth,red!30]{sin(deg(x))};
        \addlegendentry{Continuous System}
      \end{axis}
    \end{tikzpicture}
    \begin{tikzpicture}[scale=0.69]
      \begin{axis}[
        domain=0:10,
        xlabel=$t$,
        ylabel=$X$]
        \addplot[domain=0:2,smooth,red!30]{sin(deg(x))};
        \addplot[dashed,mark=*,blue] plot coordinates 
        {
          (2,0.9)
          (2, 0)
        };
        \addplot[domain=2:3,smooth,red!30]{3*x-6};
        \addplot[dashed,mark=*,blue] plot coordinates 
        {
          (3, 3)
          (3, -0.8)
        };
        \addplot[domain=3:6,smooth,red!30]{cos(deg(x^2))};
        \addplot[dashed,mark=*,blue] plot coordinates 
        {
          (6, 0)
          (6, 1)
        };
        \addplot[domain=6:10,smooth,red!30]{cos(deg(x))};
        \addlegendentry{Hybrid System}
      \end{axis}
    \end{tikzpicture}
  \end{center}
  \caption{Runs of discrete, continuous, and hybrid systems.}
  \label{flowjump}
\end{figure*}
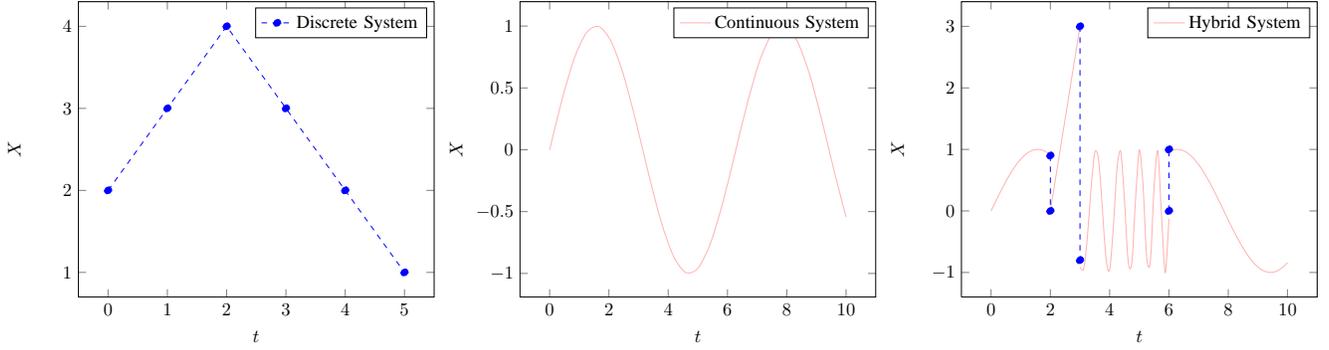
In the previous two subsections we discussed modeling of purely discrete and
purely continuous dynamical systems.   
We saw that in a discrete dynamical system the state of the system changes
during a discrete transition where it ``jumps'' (see Figure~\ref{flowjump}) to
the new value governed by the transition relation, while in a continuous system
the state of the system continuously ``flows'' (see Figure~\ref{flowjump}) in a
fashion governed by ordinary differential equations.  
Hybrid systems share their properties with both discrete as well as continuous
systems, as their state progresses with time in both discrete jumps as well as
continuous flows.
In this section we present hybrid automata, a combination of extended finite
state machines and continuous dynamical systems, where in every control mode the
dynamics of the variables of the system can be specified using ordinary
differential equations.  

\begin{definition}[Hybrid Automata: Syntax]
  A hybrid automaton is a tuple $\Hh= (M, M_0, \Sigma, X, \Delta, I, F, V_0)$
  where: 
  \begin{itemize}
  \item $M$ is a finite set of control \emph{modes} including a distinguished
    initial set of control modes $M_0 \subseteq M$, 
  \item $\Sigma$ is a finite set of \emph{actions},
  \item $X$ is a finite set of real-valued \emph{variable},
  \item $\Delta \subseteq M \times \pred(X) \times \Sigma \times \pred(X\cup X')
    \times M$ is the \emph{transition relation},   
  \item $I: M \to \pred(X)$  is the mode-invariant function,
  \item $F: M \to (\Real^{|X|} \to \Real^{|X|})$ is the mode-dependent flow
    function characterizing the flow for each mode $m \in M$ as the set of
    ODEs $\dot{X} = F(m)(X)$, and  
  \item $V_0 \in \pred(X)$ is the set of initial valuations.
  \end{itemize}
  To ensure existence of unique solutions of the ODEs in flow functions, we
  assume that for each mode $m \in M$ the flow function $F(m)$ is
  Lipschitz-continuous.   
\end{definition}

Just like in an  extended finite state machine, a configuration of a hybrid automaton is a tuple $(m,
\nu)$ where $m \in M$ is a mode and $\nu \in \Real^{|X|}$ is a variable
valuation.  
For a Lipschitz-continuous flow function $F: M \to (\Real^{X} \to \Real^{|X|})$,
a valuation $\nu \in \Real^{|X|}$, a mode $m \in M$, and a time delay $t \in
\Rplus$ we define $(\nu {\pplus{F(m)}} t)$ for the unique valuation $f(t)$ where
$f$ is the unique run of the continuous dynamical system $(X, F(m), \nu)$.  
For a jump predicate $j \in \pred(X\cup X')$ and valuation $\nu$ we define
$\nu[j]$ for the set of valuations $\nu' \in \Rplus^{|X|}$ such that $(\nu,
\nu') \in j$. 

The execution of a hybrid automaton begins in an initial configuration 
$(m_0, \nu_0)$ where $m_0 \in M_0$ is an initial mode and $\nu_0 \in V_0$ is an
initial valuation satisfying $\nu_0 \in \sem{I(m_0)}$.
The system stays in a mode for some time, say $t_1 \in \Rplus$, and
while the system stays in a control mode $m$ the valuation of the variables
changes according to ODE specified by the flow $F(m)$ of the corresponding mode. 
After spending $t_1 \in \Rplus$ time in mode $m_0$ an enabled transition $(m_0, g,
a, j, m_1)$ is nondeterministically chosen and executed. 
Notice that we say that a transition $(m_0, g, a, j, m_1)$ is enabled if $(\nu_0
\pplus{F(m_0)} t_1) \in \sem{g}$ and all the intermediate valuations that system
passes through from $\nu_0$ to $(\nu_0 \pplus{F(m_0)} t_1)$ satisfy the invariant
of the mode $m_0$, i.e. for all $t \in [0, t_1]$ we have that $(\nu_0
\pplus{F(m_0)} t) \in \sem{I(m_0)}$.
After executing the transition $(m_0, g, a, j, m_1)$ the state  of the
system jumps to a new configuration $(m_1, \nu_1)$ such that $\nu_1 \in
\sem{I(m_1)}$ and $\nu_1 \in (\nu_0 \pplus{F(m_0)} t_1)[j]$.
The system continues its operation in a similar manner from the resulting
configuration $(m_1, \nu_1)$. We can formalize this semantics using a
(uncountably infinite) state transition graph. 
\begin{definition}[Hybrid Automata: Semantics]
\label{ha:sem}
The semantics of a hybrid automaton $\Hh {=} (M, M_0, \Sigma, X, \Delta, I, F, V_0)$ is
  given as a state transition graph $T^\Hh {=} (S^\Hh, S_0^\Hh, \Sigma^\Hh,
  \Delta^\Hh)$ where:  
  \begin{itemize}
  \item $S^\Hh \subseteq (M \times \Real^{|X|})$ is the set of configurations of
    $\Hh$ such that for all $(m, \nu) \in S^\Hh$ we have that $\nu \in
    \sem{I(m)}$; 
  \item $S_0^\Hh \subseteq S^\Hh$ s.t. $(m, \nu) \in S^\Hh_0$ if $m \in
    M_0$ and $\nu \in V_0$;
  \item $\Sigma^\Hh = \Rplus \times \Sigma$ is the set of labels;
  \item $\Delta^\Hh \subseteq S^\Hh \times \Sigma^\Hh \times S^\Hh$ is the set of
    transitions such that $((m, \nu), (t, a), (m', \nu')) \in \Delta^\Hh$ if 
    there exists a transition $\delta = (m, g, a, j, m') \in \Delta$ such that 
    \begin{itemize}
    \item 
      $(\nu {\pplus{F(m)}} t) \in \sem{g}$;
    \item
      $(\nu \pplus{F(m)} \tau) \in \sem{I(m)}$ for all $\tau \in [0, t]$;
    \item 
      $\nu' \in (\nu \pplus{F(m)} t)[j]$; and
    \item
    $\nu' \in \sem{I(m')}$. 
    \end{itemize}
  \end{itemize}
\end{definition}

\begin{example}[A bouncing ball] 
  In Figure~\ref{ball} we model a bouncing ball using a hybrid automaton with
  one control mode $m$ and two variables: the variable $x_1$, representing the
  vertical position of the ball, and the variable $x_2$, representing the vertical
  velocity of the ball. 
  \begin{figure}[h]
  \begin{center}
    \begin{tikzpicture}
      \tikzstyle{every state}=[fill=gray!20!white,shape=rounded rectangle]
      
      \node[initial,state,fill=gray!30,label=below:$x_1\geq0$] (A) {$\begin{array}{l}\dot{x_1}=x_2,\\
          \dot{x_2} = -g\\ m  \end{array}$} ;
      
      \path[->,thick] (A) edge[loop above] node [above]{$\begin{array}{ll}
          x_1{=}0 \wedge x_2 {\leq} 0,\\ \text{ impact } \\ x_1'{ =}
          x_1 \wedge x_2'{=} -c x_2 \end{array}$} (A);
    \end{tikzpicture}
  \end{center}
  \caption{ A hybrid automaton modeling the dynamics of a bouncing ball}
  \label{ball}
\end{figure}
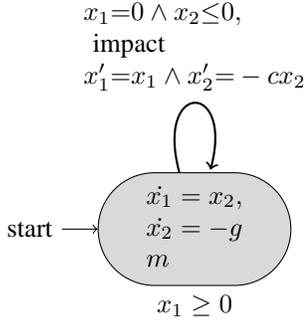 

  The differential equations governing the free fall of the ball can be given
  using Newton's law of motion as $\dot{x_1} = x_2$ and $\dot{x_2} = -g$.
  The valuations of the variables flow according to these equations until the ball
  comes in the contact with ground, and at that time it reverses the direction
  of its velocity, while losing some energy proportional to its restitution
  coefficient $c$, i.e. after the impact we have $x_1' = x_1$ and $x_2'=-c x_2$.
  Observe that the bouncing ball system is a hybrid system since its dynamics
  involve both flows and jumps. 
  The continuous dynamics of the system is captured using flow function of the
  unique mode $m$, while the jump is modeled with the discrete transition
  labeled \emph{impact}. 
  For the starting valuation we assume $x_1 = \ell$ meters and $x_2 = 0$.
  Formally the hybrid automata $H = (M, M_0, \Sigma, X, \Delta, I, F, V_0)$
  models the bouncing ball where: 
  \begin{itemize}
  \item $M = M_0 = \set{m_0}$,
  \item $\Sigma = \set{\text{impact}}$,
  \item $X = \set{x_1, x_2}$,
  \item $\Delta$ contains the following transition 
    \[ 
    (m, x_1{=}0 \wedge x_2 {\leq} 0, \text{impact}, x_1'{ =}
    x_1 \wedge x_2'{=} -c x_2, m),
    \]
  \item $I(m) = x_1 {\geq} 0$,
  \item $F(m) = \dot{x_1}=x_2 \wedge \dot{x_2} = -g$, and
  \item $V_0 = \set{(\ell, 0)}$.
  \end{itemize}
  The transition diagram corresponding to this automaton is shown in
  Figure~\ref{ball}(a). 
  The transition diagram of a hybrid automaton follows the similar conventions as
  that of an  extended finite state machine, with the exception of flow conditions.
  We write flow conditions of a mode inside the rounded rectangle representing
  the mode.  

  Now let us explain the unique run of the system starting from the configuration
  $(m, (\ell, 0))$.
  The solution to ODE corresponding to the flow function is 
  \begin{equation}
    \label{eq1}
  x_1(t) = - \frac{1}{2} g t^2 + C t + D \text{ and } 
  x_2(t) = - g t + C.
  \end{equation}
  For the initial configuration is $(m, (\ell, 0))$ solving (\ref{eq1}) we get $C
  = 0$ and $D=\ell$.
  Hence from $(m, (\ell, 0))$ system flows according to the equations $ x_1(t) = -
  \frac{1}{2} g t^2 + \ell$ and $x_2(t) = - g t$.
  According to these equations the value of variable $x_1$ continue to fall for
  the next $t_1 = \sqrt{2\ell/g}$ time units when $x_1$ becomes $0$, and the
  transition \emph{impact} becomes available and must be taken (since the
  invariant of the mode requires $x_1$ to be non-negative).
  Immediately before taking the transition the configuration is $(0, -g t_1)$.
  Using our notations we can write it as $(0, -g t_1) = (\ell, 0)\pplus{F(m)} t_1$.
  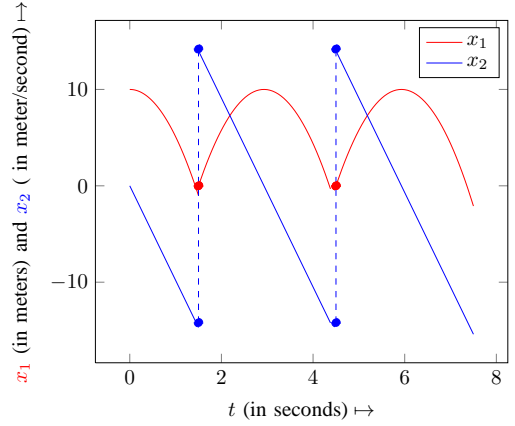
\begin{figure}
\begin{center}
\scalebox{0.8}{  \begin{tikzpicture}
      \begin{axis}[domain=0:10,
        xlabel=$t \text{ (in seconds)} \mapsto $,
        ylabel=${{\color{red}x_1} \text{ (in meters)} \text{ and }
          {\color{blue} x_2} \text{ ( in meter/second)}} \mapsto$
        ]
        \addplot[domain=0:1.5,smooth,red] {-9.8*0.5*x^2+10}; 
        \addplot[domain=0:1.5,smooth,blue] {-9.8*x}; 
        \addplot[dashed,mark=*,blue] plot coordinates 
        {
          (1.5,-14.2)
          (1.5, +14.2)
        };
        \addplot[dashed,mark=*,red] plot coordinates 
        {
          (1.5,0)
          (1.5, 0)
        };
        \addplot[domain=1.5:4.38,smooth,red] {-9.8*0.5*(x-1.5)^2+14*(x-1.5)}; 
        \addplot[domain=1.5:4.38,smooth,blue] {-9.8*(x-1.5)+14}; 
        \addplot[dashed,mark=*,blue] plot coordinates 
        {
          (4.5,-14.2)
          (4.5, +14.2)
        };
        \addplot[dashed,mark=*,red] plot coordinates 
        {
          (4.5,0)
          (4.5, 0)
        };
        \addplot[domain=4.5:7.5,smooth,red] {-9.8*0.5*(x-4.5)^2+14*(x-4.5)}; 
        \addplot[domain=4.5:7.5,smooth,blue] {-9.8*(x-4.5)+14}; 
        \legend{$x_1$, $x_2$};
      \end{axis}
    \end{tikzpicture}
}
\end{center}
  \caption{a run of the system where the initial vertical position is $\ell = 10$
    meters and the coefficient of restitution $c = 1$.} 
  \label{ball-phase}
\end{figure}
  
  After taking the transition \emph{impact}  this valuation
  changes according to the jump function $x_1'{ =} x_1 \wedge x_2'{=} -c x_2$
  resulting in the new valuation $(0, cgt_1)$. 
  Again, in our notation we write 
  $(0, cgt_1) \in (0, -gt_1)[x_1'{ =} x_1 {\wedge} x_2'{=} {-}c x_2]$.
  The run of the system, so far, can be written as $\seq{(m, (\ell, 0)),
    (t_1,\text{impact}), (m, (0, cgt_1))}$. 
  Now from the configuration $(m, (0, cgt_1))$ the system can flow continuously
  according to $F(m)$.  
  Solving (\ref{eq1}) for this initial valuation we get $C = cgt_1$ and $D =0$. 
  Hence from $(m, (0, cgt_1))$ the system flows according to the
  equations $x_1(t) = - \frac{1}{2} g t^2 + cgt_1 t$ and $x_2(t) = - g t
  +cgt_1$ for the next $t_2 = 2ct_1$ time units till it reaches
  the valuation $x_1 = 0$ (the ball hits the ground again).  
  At this point the resulting configuration will be $(0, -cgt_1)$ and after the
  transition the configuration will be $(0, c^2gt_1)$.
  The system continues in this fashion forever and realizes the following
  infinite run of the system:
  \begin{eqnarray}
    \langle
    (m, (\ell, 0)), &(t_1,\text{impact}),& (m, (0, cgt_1)),\nonumber\\
    & (2ct_1,\text{impact}),& (m, (0, c^2gt_1)), \nonumber\\ 
    & (2c^2t_1,\text{impact}),& (m, (0, c^3gt_1)), \ldots \rangle, \label{eqrun}
  \end{eqnarray}
  where $t_1 = \sqrt{2\ell/g}$.
  The first two transitions of the run for $\ell = 10$ and $c =1$ are shown in
  Figure~\ref{ball}(b).  
\end{example}

  \begin{figure*}[t]
  \begin{center}
   \scalebox{0.7}{ 
    \begin{tikzpicture}[->,>=stealth',shorten >=1pt,auto,semithick]
      \tikzstyle{every state}=[fill=gray!20!white,minimum size=3em,rounded rectangle]
      
      \node[initial,initial where=above,state,fill=gray!30] at (0, 0) (A) {$U_1$} ;
      \node[state,fill=gray!20] at (0, -2) (B) {$\begin{array}{lll} \dot{x_1} =
          1 \\ S_1 \end{array}$} ;
      \node[state,fill=gray!20] at (0, -5) (C) {$F_1$} ;
      
      \path (A) edge node {$\beg_1$} (B);
      \path (B) edge node {$\begin{array}{lll} x_1 = 3,\\  \fin_1, \\ \done_1' = 1\end{array}$} (C);
      
      \node at (0, -6) {$\Hh_{j_1}$};
    \end{tikzpicture}
  }
   \scalebox{0.7}{
     \begin{tikzpicture}[->,>=stealth',shorten >=1pt,auto,semithick]
      \tikzstyle{every state}=[fill=gray!20!white,minimum size=3em,rounded rectangle]
      
      \node[initial,initial where=above,state,fill=gray!30] at (0, 0) (A) {$U_2$} ;
      \node[state,fill=gray!20] at (0, -2) (B) {$\begin{array}{lll} \dot{x_2} =
          1 \\ S_2 \end{array}$} ;
      \node[state,fill=gray!20] at (0, -5) (C) {$F_2$} ;
      
      \path (A) edge node {$\done_1=1, \beg_2$} (B);
      \path (B) edge node {$\begin{array}{lll}x_2 = 4, \\ \fin_2,\\ \done_2' = 1\end{array}$} (C);
      
      \node at (0, -6) {$\Hh_{j_2}$};
    \end{tikzpicture}
    }
    \scalebox{0.7}{
    \begin{tikzpicture}[->,>=stealth',shorten >=1pt,auto,semithick]
      \tikzstyle{every state}=[fill=gray!20!white,minimum size=3em,rounded rectangle]
      
      \node[initial,initial where=above,state,fill=gray!30] at (0, 0) (A) {$I_1$} ;
      \node[state,fill=gray!20] at (-1.5, -5) (B) {$P_{1,1}$} ;
      \node[state,fill=gray!20] at (1.5, -5) (C) {$P_{1,2}$} ;
      
      \path (A) edge node[pos=0.8] {$\beg_1$} (B);
      \path (B) edge[bend left] node[pos=0.8] {$\fin_1$} (A);

      \path (A) edge node[left,pos=0.9] {$\beg_2$} (C);
      \path (C) edge[bend right] node[right,pos=0.9] {$\fin_2$} (A);

      \node at (0, -6) {$\Hh_{m_1}$};
    \end{tikzpicture}    
  }
    \scalebox{0.7}{
    \begin{tikzpicture}[->,>=stealth',shorten >=1pt,auto,semithick]
      \tikzstyle{every state}=[fill=gray!20!white,minimum size=3em,rounded rectangle]
      
      \node[initial,initial where=above,state,fill=gray!30] at (0, 0) (A)
      {$(U_1, U_2, I_1)$} ;

      \node[state,fill=gray!20] at (-2, -2) (B) {$\begin{array}{lll} \dot{x_1}
          =1 \\ (S_1, U_2, P_{1,1})\end{array}$} ;
      \node[state,fill=gray!20] at (2, -2) (C) {$\begin{array}{lll} \dot{x_2}
          =1 \\ (U_1, S_2, P_{1,2})\end{array}$} ;
      
      \node[state,fill=gray!20] at (-2, -4) (D) {$(F_1, U_2, I_1)$} ;
      \node[state,fill=gray!20] at (2, -4) (E) {$(U_1, F_2, I_1)$} ;

      \node[state,fill=gray!20] at (-2, -6) (F) {$\begin{array}{lll} \dot{x_2}
          =1 \\ (F_1, S_2, P_{1,2})\end{array}$} ;
      \node[state,fill=gray!20] at (2, -6) (G) {$\begin{array}{lll} \dot{x_1}
          =1 \\ (S_1, F_2, P_{2,1})\end{array}$} ;
      
      \node[state,fill=gray!20] at (0, -8) (H) {$(F_1, F_2, I_1)$} ;

      \path (A) edge node[left] {$\beg_1$} (B);
      \path (A) edge node {$\done_1=1,\beg_2$} (C);

      \path (B) edge node[left] {$x_1=3, \fin_1, \done_1'=1$} (D);
      \path (C) edge node {$x_2=4, \fin_2, \done_2'=1$} (E);

      \path (D) edge node[left] {$\done_1=1, \beg_2$} (F);
      \path (E) edge node {$\beg_1$} (G);

      \path (F) edge node[left] {$x_2=4, \fin_2, \done_2'=1$} (H);
      \path (G) edge node {$x_1=3, \fin_1, \done_2'=1$} (H);
     
      \node at (0, -9) {$\Hh_{j_1}\otimes\Hh_{j_2}\otimes\Hh_{m_1}$};
    \end{tikzpicture}    
}
\caption{Network of hybrid automata $\Hh_{j_1}$, $\Hh_{j_1}$, and $\Hh_{m_1}$
  corresponding to jobs $j_1$ and $j_2$, and a machine $m_1$, and their
  product automata $\Hh_{j_1}\otimes\Hh_{j_2}\otimes\Hh_{m_1}$.}
\label{job2}
\end{center}
\end{figure*}
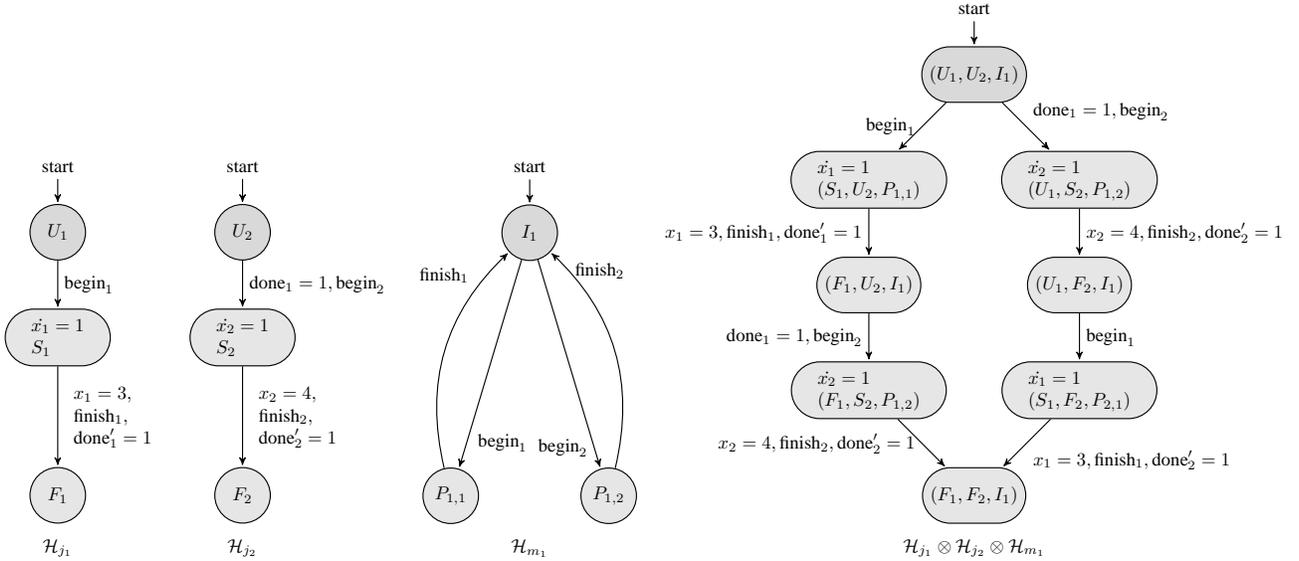
For a given run $r = \seq{(m_0, \nu_0), (t_1, a_1), (m_1, \nu_1),
  \ldots}$ of a hybrid automaton we define its time $\tot{r}$ is defined as 
\[
\tot{r} =  \sum_{i=1}^{\infty} t_i. 
\] 
We say that a run $r$ \emph{time-diverging} if $\tot{r} = \infty$. 
For an example of a time-diverging run consider~(\ref{eqrun}) for $c=1$ as shown
in the Figure~\ref{ball}(b) where time between every consecutive transition is
$2\sqrt{2\ell/g}$.
The infinite run in this example seems natural since we assume the restitution
coefficient $c=1$, and under this unrealistic situation we expect the ball to
bounce indefinitely.  
However, given the generality of the model of hybrid automata the time
divergence of a run is not always guaranteed.
As an example consider again the bouncing ball system now with restitution
coefficient $0 < c < 1$.
In this case the time of the run~(\ref{eqrun}) is $\tot{r} = t_1(1+c)/(1-c)$ is
finite for any $ 0 < c < 1$.
Runs that are not time-diverging, on an intuitive level, are not physically
realizable since they execute infinitely many discrete transitions in a finite
amount of time. 
Assuming the possibility of realizing infinitely many discrete actions in a
finite time often lead to paradoxical situations, commonly known as Zeno's
paradoxes, and the runs that do not diverge also go by the name of Zeno runs. 
We call a hybrid automaton \emph{non-Zeno} if it does not permit any Zeno run. 
We will later see that the ability of hybrid automata to model Zeno runs
often cause difficulty in their analysis.

\subsection{Composition of a Network of Hybrid Automata}
While modeling a complex hybrid system using a hybrid automata, it is often 
convenient to represent various components of the system as a network of
hybrid automata $\Cc = \set{\Hh^1, \Hh^2, \ldots, \Hh^n}$ that communicate with
each other using shared variables and action.
Specifying a system as a composition of various subsystems offer two main
advantages, namely abstraction and modularity. 
The first advantage (abstraction) is that it allows the system designer to
concentrate on the details of one subsystem at a time without getting
overwhelmed by the complexity of the interaction of this subsystem with other. 
The second advantage (modularity) is that in a system designed in this fashion,
it is easy to add, remove, and modify subsystems.
The semantics of such a network can also be given as a single hybrid automaton
$\Hh$, called the product automaton of $\Cc$, whose states are products of
states of individual component automata.  
We define this construction as the following.
\begin{definition}[Composition]
  Let $\Cc = \set{\Hh^1, \Hh^2, \ldots, \Hh^n}$ be a network of hybrid automata
  where for each $1 \leq i \leq n$ let $\Hh^i$ be 
  $(M^i, M_0^i, \Sigma^i, X^i, \Delta^i, I^i, F^i, V_0^i)$.
  For an action $a \in \cup_{i=1}^n \Sigma_i$ we define 
  $E(a) \rmdef \set{i \::\: a \in \Sigma^i}$.
  The product automata $\Hh_1 \otimes \Hh_2 \otimes \cdots
  \Hh_n$ of $\Cc$ is defined as a hybrid automaton $H = (M, M_0, \Sigma, X,
  \Delta, I, F, V_0)$ where 
  \begin{itemize}
  \item $M = M^1 \times M^2 \times \cdots M^n$,
  \item $M_0 = M^1_0 \times M^2_0 \times \cdots M^n_0$,
  \item $\Sigma = \Sigma^1 \cup \Sigma^2 \cup \ldots \Sigma^n$,
  \item $X = X^1 \cup X^2 \cup \ldots X^n$,
  \item $\Delta {\subseteq} (M {\times}\pred(X) {\times} \Sigma {\times}
    \pred(X{\cup} X') {\times} M)$  is defined s.t. $((m_1, \ldots, m_n), g, a, j,
    (m_1', \ldots, m_n')) \in \Delta$ if and only if  for all $i \not \in E(a)$ we have that
    $m_i =  m_i'$ and  for all $i \in E(a)$ there exists a
    transition $(m_i, g_i, a, j_i, m_i')$ such that $g = \wedge_{i \in E(a)}
    g_i$ and $j = \wedge_{i \in E(a)} j_i$.
  \item $I$ is such that $I(m_1, \ldots, m_n) = \wedge_{i=1}^{n} I^i(m_i)$; 
  \item $F$ is such that $F(m_1, \ldots, m_n)(x) = F^i(m_i)(x)$ if $x {\in} X^i$;
    and
  \item $V_0$ is such that $V_0 = \wedge_{i=1}^{n} V_0^i$.
  \end{itemize}
\end{definition}
As an example of modeling a system using a composition of a network of hybrid
automata, we consider the job-shop scheduling problem modeled as a collection of
hybrid automata.  
In the next section, we show that solving the job-shop problem reduces to
solving a verification problem (reachability) over the resulting hybrid
automata. 
\begin{example}[Job-shop Scheduling Problem]
  The job-shop scheduling problem is an important optimization problem studied
  frequently in both computer science as well as in operations research.
  It consists of a finite set $\jJ=\set{j_1, \dots, j_n}$ of jobs to be
  processed on a finite set $\mM=\set{m_1, \dots, m_k}$ of machines. 
  There is a strict precedence requirement between the jobs given as a strict
  partial order $\prec$ over the set of jobs in $\jJ$. 
  A mapping $\zeta : \jJ \to 2^\mM$ specifies the set of machines where
  a job can be executed, while the function $\delta: \jJ \to \Rplus$ specify the
  time duration of a job. 
  We can model the job-shop scheduling problem using a network of hybrid
  automata where each job and each machine is specified using a separate hybrid
  automaton.  
  We have the following constraints on the job execution:
  i) a job $j$ can be executed iff all jobs in its precedence, 
  $j{\downarrow}=\{j' \::\: j' \prec j\}$, 
  have terminated;
  2) each machine $m \in \mM$ can process atmost one job at a time; and
  3) a job, once started, cannot be preempted.

  \noindent{\bf Modeling Jobs}. We model each job $j_i \in \jJ$ as a hybrid automaton $H_i$ 
  with three modes $U_i$ (unscheduled), $S_i$ (scheduled), and $F_i$ (finished)
  where $U_i$ being the initial mode.
  With each automaton $H_i$ we associate two variables: variable $x_i$, measuring
  the time while the job $j_i$ is being executed on a machine; and variable
  $\done_i$ with values $0$ and $1$ denoting whether the job is unfinished ($0$) or
  finished ($1$). 
  For each job $j_i$ the initial valuation of variable $x_i$ is $0$, while the
  valuation for $\done_i{=}0$. 
  For each mode $m \in \set{U_i, S_i, F_i}$ we have that $F(m)(\done_i) = 0$ and
  $F(S_i)(x_i) = 1$ (to measure time spent during processing of the job)  and
  $F(U_i)(x_i) = 0$ and $F(F_i)(x_i) = 0$.  
  The transition from a mode $U_i$ to $S_i$ with action $\beg_i$ is guarded
  by the condition that all of the preceding jobs according to $\prec$ has
  been finished, i.e. $\wedge_{k: k \prec i} (\done_k =1)$.
  The transition from a mode $S_i$ to $F_i$ with action $\fin_i$ is guarded
  by predicate $\done_i' = \delta(j_i)$ specifying that job $j_i$ takes exactly
  $\delta(j_i)$ time units , and the jump of this transition includes 
  $\done_j' = 1$.
  
  \noindent{\bf Modeling Machines}. 
  We model each machine $m_i \in \mM$ using a hybrid automaton with no variable
  and $k+1$ modes where $k$ is the number of jobs that can be scheduled to this
  machine: there is a unique mode $I_i$ (idle), and for each job $j_j$ that can
  be scheduled to this machine, i.e. $m_i \in \zeta(j_i)$ there is a mode
  $P_{i,j}$ (corresponding to processing job $j_j \in \jJ$ on machine $m_i \in
  \mM$).  
  For each mode $P_{i,j}$ there is a transition from $I_i$ to $P_{i, j}$ with
  action $\beg_j$ and a transition from $P_{i,j}$ to $I_i$ with action $\fin_j$
  denoting the scheduling and the finishing, respectively, of job $j_j$ on machine $m_i$. 
  Since there are no variables associated with these automata the guard and the
  jump predicate of these transitions is simply $\top$.
  
  As an example of such modeling, consider the job-shop problem with $J=\{j_1,j_2\}$,  
  $M=\{m_1\}$, $\zeta(j_1)=\zeta(j_2)=m_1$,  $j_1 \prec j_2$, and $\delta(j_1) =
  3$ and $\delta(j_2) = 4$. 
  Figure~\ref{job2} shows hybrid automata $\Hh_{j_1}, \Hh_{j_2}$, and $\Hh_{m_1}$
  corresponding to the jobs $j_1$ and $j_2$, and the machine $m_1$ respectively.
  This figure also shows the composition of these automata $\Hh_{j_1} \otimes
  \Hh_{j_2} \otimes \Hh_{m_1}$  representing the hybrid automata corresponding
  to the complete job-shop problem. 
\end{example}

\section{Formal Verification of Hybrid Systems}
\label{sec:verification}
Formal modeling and verification of systems is the set of techniques that employ
rigorous mathematical reasoning to analyze properties of a system. 
In this article we concentrate on model checking---a formal verification
framework introduced by Clarke, Sifakis and Emerson~\cite{CES09}---that, given a
formal description of a system and its specification, systematically
verifies whether the specification holds for the system model.  
Since, by definition the states of a dynamical system changes with time,
classical propositional logic is not sufficient to reason with temporal
properties of such dynamical systems.
Temporal logics extend propositional or predicate logics by modalities 
that are useful to capture the change of behaviour of a system over time. 
Manna and Pnueli~\cite{MP92,TM0} were the first one to propose and promote the
use of temporal logic to specify properties of dynamical systems in the context
of system verification.
\emph{Linear temporal logic} (LTL)~\cite{MP92}, \emph{computation tree logic}
(CTL) and its generalization CTL$^*$~\cite{MP92,CGP99}, and modal
$\mu$-calculus~\cite{Eme96} are some of the popular temporal logics used for the
system specification. 
Timed and weighted extensions of these logics  e.g. metric temporal logics
(MTL and MITL~\cite{OW08}), duration calculus (DC)~\cite{GNRR93}, and 
weighted logics~\cite{BBR04,Bou06} have also been proposed to specify more
involved quantitative properties of hybrid dynamical systems.

In this article we limit the discussion to simple qualitative properties of
hybrid systems that broadly can be classified into the following two broad
categories~\cite{MP87}: 
\begin{itemize}
\item 
  The \emph{reachability or guarantee properties}, that ask whether the system
  can reach a  configuration satisfying certain property $p$? 
  (symbolically, we write $\Diamond p$ and we say \emph{eventually} $p$); and
\item 
  The \emph{safety properties} that ask whether the system can stay forever in
  configurations satisfying certain property $p$? 
  (symbolically, we write $\Box p$ and we say \emph{always} or \emph{globally} $p$).
\end{itemize}
The linear temporal logic, LTL, provides a formal language to specify more
involved nesting of such properties with ease.  
We begin this section (Section~\ref{sec:LTL}) by introducing Kripke structures
that provide a way to mark states of the hybrid automata with properties of
interest, and present the syntax and semantics of LTL that are interpreted over
Kripke structures. 
In Section~\ref{sec:undec} we formally introduce LTL model-checking problem for
hybrid automata, and show that in general this problem is undecidable.
On a positive note, in Section~\ref{sec:finite}, we show that LTL model-checking
can be algorithmically solved for finite Kripke structures.
Finally, in Section~\ref{sec:finiteBisum} we introduce the notion of
bisimulation, and show that the existence of a finite bisimulation implies the
decidability of LTL model-checking problem.

\subsection{Hybrid Kripke Structures and Linear Temporal Logic}
\label{sec:LTL}
The formal specification of the underlying system begins by identifying key
properties of interests (called atomic propositions) regarding the states of the
system under verification.  
Kripke structures provide a way to label the states of state-transition graphs
with such atomic propositions, and the linear temporal logic specifies
properties of the sequence of the truth values of these propositions, called 
\emph{traces}, for the runs of corresponding transition system.
Hence, before we introduce linear temporal logic LTL we need to introduce Kripke
structures and their corresponding hybrid extension, and the concept of traces. 

\begin{definition}[Hybrid Kripke Structure]
  A \emph{Kripke Structure} is a tuple $(\Tt, P, L)$ where:
  \begin{itemize}
  \item 
    $\Tt = (S, S_0, \Sigma, \Delta)$ is a state transition graph, 
  \item 
    $P$ is a finite set of \emph{atomic propositions}, and 
  \item 
    $L: S \rightarrow 2^{P}$ is a labeling function that labels every state
    with a subset of $P$. 
  \end{itemize}
  Similarly, we define a \emph{Hybrid Kripke Structure} as a tuple $(\Hh, P, L)$
  where:  
  \begin{itemize}
  \item 
    $\Hh = (M, M_0, \Sigma, X, \Delta, I, F, V_0)$ is a hybrid automaton,
  \item 
    $P$ is a finite set of \emph{atomic propositions}, and 
  \item 
    $L: M \rightarrow 2^{P}$ is a labeling function that labels every mode
    with a subset of $P$. 
  \end{itemize}
  Observe that the semantics of a hybrid Kripke structure is a Kripke
  structure. 
\end{definition}
Let us fix a hybrid Kripke structure $(\Hh, P, L)$ and its semantics Kripke
structure $(\sem{\Hh}, P, L)$ for the rest of this section.
When the set of propositions and labeling function is clear from the context, we
use the terms state transition graph and Kripke structure, and the terms
hybrid Kripke structure and hybrid automaton interchangeably. 


Given a hybrid Kripke structure $(\Hh, P, L)$ and an infinite run $r=\seq{(m_0,
  \nu_0), (t_1, a_1), (m_1, \nu_1), \ldots, (m_n, \nu_n), \ldots}$ of $\Hh$, we
define a trace corresponding to $r$, denoted as 
$\Trace(r)$, as the sequence $\seq{L(m_0), L(m_1), L(m_2),  \dots L(m_n), \ldots}$.
Let $\Trace(\Hh, P, L)$ be the set of traces of the Hybrid Kripke Structure $\Hh$.
For a trace $\sigma = \seq{P_0, P_1, \ldots, P_n, \ldots}  \in \Trace(\Hh, P, L)$ we write
$\sigma[i]=\seq{P_i, P_{i+1}, \ldots}$ for the suffix of the trace starting at
the index $i \geq 0$. 

Now we are in position to define the syntax and semantics of linear temporal logic.
\begin{definition}[Linear Temporal Logic (Syntax)]
  The set of valid LTL formulas over a set $P$ of atomic propositions can be
  inductively defined as the following:
\begin{itemize}
\item $\top$ and $\bot$ are valid LTL formulas;
\item if $p \in P$ then $p$ is a valid LTL formula;
\item if $\phi$ and $\psi$ are valid LTL formulas then so are $\neg \phi$, $\phi
  \wedge \psi$ and $\phi \lor \psi$; 
\item if $\phi$ and $\psi$ are valid LTL formulas then so are $\bigcirc\phi$,
  $\Diamond \phi$, $\square \phi$, and $\phi \until \psi$.
\end{itemize}
\end{definition}
We often use $\phi \Rightarrow \psi$ as a shorthand for $\neg \phi \lor \psi$.
Before we define the semantics of LTL formula formally, let us give an informal
description of the temporal operators $\bigcirc$, $\Diamond$, $\square$, and
$\until$. 
LTL formulas are interpreted over traces of (Hybrid) Kripke structures. 
The formula $\bigcirc \phi$, read as next $\phi$, holds for a trace
$\sigma=\seq{P_0, P_1, P_2, \ldots}$ if $\psi$ holds for the trace $\sigma[1]$. 
The formula $\Diamond \phi$, read as eventually $\phi$, holds for a trace
$\sigma=\seq{P_0, P_1, P_2, \ldots}$ if there exists $i \geq 0$ such that the
formula $\psi$ holds for the trace $\sigma[i]$. 
The formula $\square \phi$, read as globally or always $\phi$, holds for a trace
$\sigma=\seq{P_0, P_1, P_2, \ldots}$ if for all $i \geq 0$ the formula $\psi$
holds for traces $\sigma[i]$. 
Finally, the formula $\phi \until \psi$, read as $\phi$ until $\psi$, holds for
a trace $\sigma=\seq{P_0, P_1, P_2, \ldots}$ if there is an index $i$ such that
$\psi$ holds for the trace $\sigma[i]$, and for every index $j$ before $i$ the
formula $\phi$ holds for the trace $\sigma[j]$, i.e the formula $\phi$ holds until
formula $\psi$ holds.
\begin{definition}[Linear Temporal Logic (Semantics)]
  For a trace $\sigma=\seq{P_0, P_1, P_2, \ldots}$ of a (Hybrid) Kripke
  structure we write $\sigma \models \phi$ to say that the trace
  $\sigma$ satisfies the formula $\phi$. 
  The satisfaction of LTL formulas is defined as follows:
  \begin{itemize}
  \item 
    $\sigma \models \top$ and $\sigma \not \models \bot$;
  \item 
    $\sigma \models p$ if $p \in P_0$;
  \item
    $\sigma \models \neg \phi$ if $\sigma \not \models \phi$;
  \item
    $\sigma \models \phi \wedge \psi$ if $\sigma \models \phi$ and $\sigma
    \models \psi$;
  \item
    $\sigma \models \phi \lor \psi$ if $\sigma \models \phi$ or $\sigma
    \models \psi$;    
  \item
    $\sigma \models \bigcirc\phi$ if $\sigma[1] \models \phi$;
  \item
    $\sigma \models \Diamond \phi$ if there exists $i \geq 0$ such that $\sigma[i]
    \models \phi$; 
  \item
    $\sigma \models \square \phi$ if for all $i \geq 0$ we have that $\sigma[i]
    \models \phi$; and
  \item
    $\sigma \models \phi \until \psi$ if there exists $i \geq 0$ such that
    $\sigma[i] \models \psi$, and for all $0 \leq j < i$  or
    $\sigma[j]\models \phi$. 
  \end{itemize}
  For a (hybrid) Kripke structure $(\Hh, P, L)$, and an LTL formula $\phi$ we
  say that $(\Hh, P, L) \models \phi$ if for all $\sigma \in \Trace(\Hh, P, L)$
  we have that $\sigma \models \phi$.
\end{definition}

Lamport~\cite{Lam77} observed that most of the system specifications can be
classified in safety properties (\emph{something will not happen}) and liveness
properties (\emph{something must happen}). 
Manna and Pnueli~\cite{MP87} further refined the class of specifications
starting from reachability and safety properties to introduce a hierarchy of
temporal properties using nesting of LTL operators, for instance
\begin{itemize}
\item 
  The \emph{recurrence properties} that ask whether the system can infinitely
  often visit configurations satisfying certain property $p$? 
  (symbolically, we write $\Box \Diamond p$ and we say \emph{infinitely often}
  $p$); and 
\item 
  The \emph{persistence properties} that ask whether the system visits
  configurations not satisfying a certain property $p$ only finitely often?  
  (symbolically, we write $\Diamond \Box p$ and we say \emph{eventually always}
  $p$).
\end{itemize}
Some examples for expressing reachability, safety, and liveness properties using
LTL are shown in the following example.
\begin{example}
  \label{lift}
  As an example let us write LTL specifications for an elevator serving $k$
  different floors.
  Let $op_i, fl_i$ and $req_i$ be atomic propositions representing the
  situations that ``the door at floor $i$ is open'', `` the lift is at floor $i$
  and is not moving'' and  ``there is a request for the lift to move to the
  $i$th floor'' respectively.  
  The following are some specifications in English and their LTL counterparts: 
  \begin{enumerate}
  \item Reachability property : 
    {\it The lift will visit the ground floor sometime}. 
    \[ 
    \phi_1 \rmdef \Diamond fl_0.
    \]
  \item Safety property : 
    {\it The door of the lift is never open at a floor if the lift is not
      present there}. 
    \[ 
    \phi_2 \rmdef \Box\left( \bigwedge_{i=0}^k(\neg fl_i \Rightarrow \neg op_i)\right).
    \]  
  \item Recurrence property : {\it The lift keeps coming back to the ground floor}.
    \[ 
    \phi_3 \rmdef \Box \left(\neg fl_0 \Rightarrow \Diamond fl_0 \right) \wedge \Box \Diamond
    fl_0.
    \] 
  \item Persistence property : 
    {\it Eventually always a requested floor will be eventually
      served}. 
    \[ 
    \phi_4 \rmdef\Diamond\Box \left(\bigwedge_{i=0}^k (req_i \Rightarrow \Diamond 
    fl_i)\right).
    \] 
\end{enumerate}
\end{example}
For a detailed overview of LTL for system specification, we refer the reader
to~\cite{MP87,MP92,CGP99,BK08}. 
\subsection{LTL Model Checking  for Hybrid Automata}
\label{sec:undec}
\begin{figure*}[t]
\begin{center}
\begin{tikzpicture}[->,>=stealth',shorten >=1pt,auto,node distance=1.8cm,
  semithick]
  \tikzstyle{every state}=[fill=purple!30!white,minimum size=3em,rounded rectangle]
  \node[fill=white] at (-6,0) (A1){};
  \node[initial,initial text={},state,fill=gray!30] at (-4,0) (A) {$\begin{array}{c}l_i \end{array}$} ;
  \node[state,fill=gray!20] at (-1,0) (B) {$\begin{array}{c} A_i\end{array}$} ;
  \node[state,fill=gray!20] at (3,0) (C) {$\begin{array}{c} B_i \end{array}$} ;
  \node[state,fill=gray!20] at (8,0) (D) {$\begin{array}{c}l_j\end{array}$} ;
  \path(A1) edge  node [above] {$y=0?$} (A);
  \path(A) edge  node [above] {$x_1=1?$} (B);
  \path(A) edge  node [below] {$z'=0$} (B);
  \path(B) edge node [above] {$x_1>1? x_2,y<1?$} (C);
  \path(B) edge node [below] {$x'_1,z'_1=0$} (C);
  \path(C) edge node [above] {$z=z_1?,y=1?,x_2<1?$} (D);
  \path(C) edge node [below] {$y'=0$} (D);
  \path(A) edge [loop above] node [above] {$x_2=1?,x_2'=0$} (A);
  \path(B) edge [loop above] node [above] {$x_2=1?,x_2'=0$} (B);
     \path(C) edge [loop above] node [above] {$x_2=1?,x_2'=0$} (C);
   \end{tikzpicture}
\caption{Module simulating $l_i$: increment $c$, goto $l_j$}
\label{inc}
\end{center}
\end{figure*}
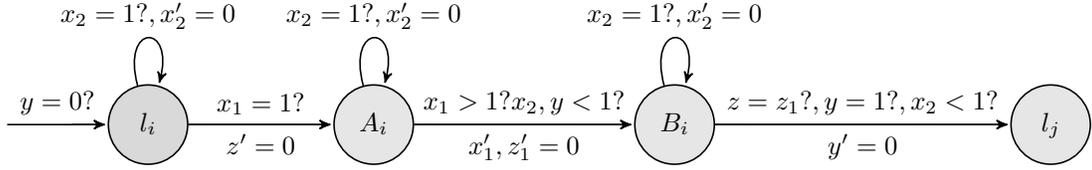
LTL model-checking problem for hybrid automata can be formally stated in the
following manner. 
\begin{definition}[LTL Model-Checking]
  Given a system modeled as a (Hybrid) Kripke structure $(\Hh, P, L)$, and a
  specification written as an LTL formula $\phi$, the {\it LTL model-checking}
  problem is to decide whether all traces of $\Hh$ satisfy $\phi$, i.e.  $(\Hh,
  P, L) \models \phi$.
  Moreover, if the system does not satisfy the property give a counterexample
  (run of the system) violating the property.
\end{definition}

\begin{example}
  Consider the following Kripke structure $\Tt$ with set of atomic propositions
  $\set{p, q}$.
  We are depicting the labeling function by writing the set of propositions
  inside the states., and we omit other non-relevant details.
  \begin{figure}[t]
\begin{center}
  
  \begin{tikzpicture}[->,>=stealth',shorten >=1pt,auto,node distance=2.8cm,
  semithick,scale=0.9]
  \tikzstyle{every state}=[fill=gray!20!white,minimum size=3em,rounded rectangle]
  \node[initial,state,fill=gray!30] at (-2.5,0) (A) {$
      \begin{array}{c}m_0\\ \{q\}\end{array}$} ;
  \node[state,fill=gray!20] at (0,0) (B) {$\begin{array}{c}m_1\\ \{p, q\}\end{array}$} ;
    \node[initial where=right,state,fill=gray!20] at (2.5,0) (C)
    {$\begin{array}{c}m_2\\ \{p\}\end{array}$} ;
\path(A) edge [bend left=30] node [above] {$a$} (B);
  \path(B) edge [bend left=30] node [above] {$a$} (A);
  \path(B) edge node {$b$} (C);
  \path (C) edge[loop above] node {$b$} (C);
  \end{tikzpicture}
\caption{A Kripke structure $\TS$.} 
\label{ts1}
\end{center}
\end{figure}
  Let us consider the LTL formulas $\phi_1= \Diamond (p \wedge \neg q)$ and $\phi_2 =
  \Box q \lor \Diamond \Box p$.
  Observe that $\Tt \not \models \phi_1$ as is clear from the counterexample
  $r = \seq{m_0, a, m_1, a, m_0, \ldots}$ as it never visits the configuration 
  satisfying $(p \wedge \neg q)$ as is clear from its trace 
  $\Trace(r)=\{q\}\{p,q\}\set{q}\{p,q\}$. 
  On the other hand, it is easy to verify that $\Tt$ satisfies $\phi_2$ as any
  run of $\Tt$ either never visits $m_2$ (and in that case satisfies $\Box q$,
  or it eventually visits $m_2$ and never leaves it (and thus satisfies $\Diamond
  \Box p$). 
\end{example}
\begin{example}[Job-Shop Scheduling Revisited]
  Consider the job-shop scheduling problem modeled as a network of hybrid
  automata in Figure~\ref{job2}.
  Consider the atomic propositions $j_1.\fin$ and $j_2.\fin$ that are true
  only in modes $F_1$ and $F_2$.
  The counterexample produced in model-checking LTL property 
  $\neg (\Diamond (j_1.\fin \wedge j_2.\fin))$ gives a valid schedule for the
  job-shop scheduling problem. 
\end{example}

Next, we show that LTL model-checking problem for hybrid Kripke
structures is undecidable. 
To prove this result, we show a reduction from a well-known undecidable
problem of reachability (halting) for two-counter \emph{Minsky
  machines}~\cite{Min67}.  

A Minsky machine $\Aa$ is a tuple $(L, C)$ where:
${L = \set{\ell_0, \ell_1, \ldots, \ell_n}}$ is the set of
instructions. There is a distinguished terminal instruction  $\ell_n$ called
HALT. 
${C = \set{c_1, c_2}}$ is the set of two \emph{counters};
the instructions $L$ are one of the following types:
\begin{enumerate}
\item (increment $c$) $\ell_i : c := c+1$;  goto  $\ell_k$,
\item (test-and-decrement $c$) $\ell_i$ : if $(c >0)$ then
  $(c:= c - 1;$ goto $\ell_k)$ else goto $\ell_m$,
  \item (Halt) $\ell_n:$ HALT.
\end{enumerate}
where $c \in C$, $\ell_i, \ell_k, \ell_m \in L$.

A configuration of a Minsky machine is a tuple $(\ell, c, d)$ where
$\ell \in L$ is an instruction, and $c, d$ are natural numbers that specify the value
of counters $c_1$ and $c_2$, respectively.
The initial configuration is $(\ell_0, 0, 0)$.
A run of a Minsky machine is a (finite or infinite) sequence of
configurations $\seq{k_0, k_1, \ldots}$ where $k_0$ is the initial
configuration, and the relation between subsequent configurations is
governed by transitions between respective instructions.
The run is a finite sequence if and only if the last configuration is
the terminal instruction $\ell_n$.
Note that a Minsky machine has exactly one run starting from the initial
configuration. 
The \emph{halting problem} for a Minsky machine asks whether 
its unique run ends at the terminal instruction $\ell_n$.
It is well known~(\cite{Min67}) that the halting problem for
two-counter Minsky machines is undecidable.
\begin{theorem}
\label{undec}
The LTL model-checking problem for hybrid Kripke structures is undecidable. 
\end{theorem}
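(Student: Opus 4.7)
The plan is to reduce the halting problem for two-counter Minsky machines, which the excerpt has recalled is undecidable, to the LTL model-checking problem for hybrid Kripke structures. Given a Minsky machine $\Aa = (L,\{c_1,c_2\})$ I would construct a hybrid Kripke structure $(\Hh_\Aa, P, L_\Aa)$ and an LTL formula $\varphi$ such that $\Aa$ halts iff $(\Hh_\Aa, P, L_\Aa) \not\models \varphi$. A convenient choice is $P = \{\mathrm{halt}\}$, with $L_\Aa$ labelling only the mode corresponding to the terminal instruction $\ell_n$, and $\varphi = \Box \neg \mathrm{halt}$, so that the model-checking problem becomes exactly the (non-)reachability of the HALT mode.

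The core of the argument is the construction of $\Hh_\Aa$ so that its runs faithfully simulate $\Aa$. For every instruction $\ell_i \in L$ I would instantiate a gadget of the form shown in Figure~\ref{inc}: the two counter values are held by continuous variables $x_1, x_2$ under the clock-style encoding $x_k = 2^{-c_k}$, so that incrementing $c_k$ amounts to halving $x_k$, decrementing to doubling $x_k$, and a zero-test to a test for the value $x_k = 1$. An auxiliary clock is kept cycling with period one via the self-loop $x_2 = 1?,\, x_2' := 0$ in the figure, enforcing that each simulated instruction consumes exactly one time unit, while further auxiliary clocks $y, z, z_1$ witness that the update of the encoding variable has been carried out with the correct factor. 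Three module templates are needed (for increment, for test-and-decrement, and for HALT), and the overall automaton is obtained by gluing them together according to the control flow of $\Aa$, with initial mode the entry of the $\ell_0$-gadget and initial valuation $x_1 = x_2 = 1$, i.e., $c_1 = c_2 = 0$.

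Correctness would then follow by induction on the number of simulated instructions, establishing the invariant that whenever control visits an instruction-entry mode after $n$ simulation steps, the encoding variables carry exactly $2^{-c_1(n)}$ and $2^{-c_2(n)}$, where $c_k(n)$ is the true value of counter $c_k$ in $\Aa$ after $n$ steps. Consequently some trace of $\Hh_\Aa$ reaches the $\mathrm{halt}$-labelled mode iff $\Aa$ halts, which is precisely $(\Hh_\Aa, P, L_\Aa) \not\models \varphi$ iff $\Aa$ halts, giving the theorem.

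The principal obstacle I foresee is establishing this inductive invariant in the presence of continuous non-determinism in the delays. Because the hybrid semantics permits arbitrary real-valued time steps, one has to rule out ``cheating'' runs through a gadget that leave via the wrong edge or with a mistimed delay and thereby produce a value other than $2^{-(c_k \pm 1)}$. This is exactly what the interaction between the periodic clock $x_2$ and the chained guards $x_1 = 1?$, $x_1 > 1?$, $z = z_1?$, $y = 1?$, $x_2 < 1?$ in Figure~\ref{inc} is designed to enforce, but the tightness argument has to be verified carefully for all three module types, and for the glue between successive gadgets where one must also rule out runs that skip intermediate reset steps.
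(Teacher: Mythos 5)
Your proposal follows essentially the same route as the paper's proof: a reduction from the halting problem for two-counter Minsky machines, encoding the counters as $x_k = 1/2^{c_k}$ and simulating each instruction with a timed gadget of the kind in Figure~\ref{inc}, whose chained equality guards (on $y$, $z$, $z_1$) together with the wrap-around clock rule out mistimed or ``cheating'' runs. The only cosmetic difference is that you phrase the reduction via non-satisfaction of $\Box\neg\mathrm{halt}$ whereas the paper uses satisfaction of $l_0 \wedge \Diamond\,\mathrm{HALT}$; the two are interchangeable for establishing undecidability.
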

\begin{proof}
Given a two counter machine ${\cal A}$, we construct a hybrid Kripke structure
${\cal H}$ and an LTL formula $\phi$ such that ${\cal H}\models \phi$ iff
${\cal A}$ halts. 
The modes of ${\cal H}$ are labeled with the labels $l_i$ of instructions. There
is a unique mode of ${\cal H}$ labeled with atomic proposition ``HALT'' which
corresponds to the terminal instruction of $\Aa$. 
The increment, decrement and test instructions are encoded by suitable modules
in ${\cal H}$.  
The variables of ${\cal H}$ are $X=\{x_1,x_2,y,z,z_1\}$ with $F(m)$ for all
modes is defined as the following: 
\[
\dot{x_1}=1 \wedge \dot{x_2} = 1 \wedge \dot{y} = 1 \wedge \dot{z}=1 \wedge
\dot{z_1}=2.
\] 
The initial mode is labeled by $l_0$, the label of the first instruction. 
The values of the counters $c,d$ are encoded as $x_1=\frac{1}{2^c}$ and $x_2=\frac{1}{2^d}$.
After the execution of each instruction, $x_1, x_2$ will contain the current
values of counters $c,d$  encoded in the above form. 
For instance, if we have $x_1=\frac{1}{2^c}$, $x_2=\frac{1}{2^d}$ before 
incrementing counter $c$, then at the end of simulating the increment
instruction, we will have $x_1=\frac{1}{2^{c+1}}$ and $x_2=\frac{1}{2^d}$.

We illustrate here the case of the increment instruction $l_i$ : increment $c$
and goto $l_j$.  
The case for the decrement instruction is similar, and hence omitted.
Mode $l_i$ is entered with $y=0$, $x_1=\frac{1}{2^c}$ and
$x_2=\frac{1}{2^d}$. On entering mode $A_i$, we have $x_1=1, y= 1-\frac{1}{2^c},
x_2=\frac{1}{2^d}+(1-\frac{1}{2^c})$  or $x_2=1-\frac{1}{2^c}-\eta$ if
$\frac{1}{2^d}+\eta=1$, $\eta \leq 1-\frac{1}{2^c}$ and $z=0$. 
Mode $B_i$ can be entered if $x_2,y<1$ and $x_1 > 1$. 
Assume $k>0$ units of time was spent at mode $A_i$.
This gives $y=1-\frac{1}{2^c}+k, x_2=\frac{1}{2^d}+(1-\frac{1}{2^c})+k$ (or
$1-\frac{1}{2^c}-\eta+k$, or $1-\eta'$ if $1-\frac{1}{2^c}-\eta+\eta'=1$, $\eta'
\leq k$), $z=k$, $x_1=0,z_1=0$ on entering mode $B_i$. 
We can reach mode $l_j$ only if the values of $z$ and $z_1$ are the same.
Assume $l$ units of time was spent at $B_i$. 
Then $z=k+l, z_1=2l$,
$x_2=\frac{1}{2^d}+(1-\frac{1}{2^c})+k+l,x_1=l,y=1-\frac{1}{2^c}+k+l$. 
To satisfy the constraints $z=z_1,y=1$,  we have $k=l$ and
$k+l=2k=\frac{1}{2^c}$ giving $x_1=\frac{1}{2^{c+1}}, x_2=\frac{1}{2^d}$,$y=0$
at $l_j$.  

The LTL formula $\phi=l_0 \wedge \Diamond$ HALT will be satisfied by 
${\cal H}$ iff $\Aa$ halts. This shows that LTL model checking of 
hybrid Kripke structures is undecidable.
\end{proof}

\subsection{LTL Model-Checking for Finite Kripke Structures}
\label{sec:finite}
As we discussed in previous section the LTL model-checking problem is
undecidable for general hybrid automata.
However, for finite Kripke structures Wolper, Vardi, and Sistla~\cite{WVS83}
developed an elegant automata-theoretic algorithm for solving the LTL
model-checking problem.  
The algorithm exploits the connection between LTL formulas and a type of
$\omega$-automata---automata that extend the theory of finite automata to
infinite inputs---called B\"uchi automata~\cite{Bu62,GTW02}.   
The syntax for the B\"uchi automata specifies a finite state transition graph
$\Tt$ along with a set $F$ of accepting states, and the semantics of B\"uchi
automata restricts the set of valid runs to the runs of $\Tt$ that visit $F$
infinitely often. 
In general B\"uchi automata are closed under all Boolean operations including
union, intersection, and complementation, however deterministic variant of
B\"uchi automata is not closed under complementation. 
Emptiness checking for B\"uchi automata can be decided efficiently (linear in
time) by analyzing strongly connected components of $\Tt$. 

The LTL model-checking problem exploits the following connection between linear
temporal logic  and B\"uchi automata.
\begin{theorem}[LTL-to-B\"uchi Automata~\cite{WVS83}]
  \label{thm:LTL2Buchi}
  For every LTL formula $\phi$ we can effectively construct a
  finite (B\"uchi) automaton $\Aa_\phi$ (of size exponential in $\phi$) such
  that words recognized by $\Aa_\phi$ are precisely the set of traces that
  satisfy $\phi$.   
\end{theorem}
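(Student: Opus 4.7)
The plan is to adapt the classical Vardi--Wolper tableau construction. Given an LTL formula $\phi$ over atomic propositions $P$, I would first define the \emph{closure} $cl(\phi)$ as the smallest set containing $\phi$ that is closed under subformulas and under negation (identifying $\neg\neg\psi$ with $\psi$). Observe $|cl(\phi)| = O(|\phi|)$. The idea is that a trace $\sigma$ satisfying $\phi$ can, at each position $i$, be tagged with the set of formulas in $cl(\phi)$ that hold on the suffix $\sigma[i]$, and the automaton will guess these tags on the fly.

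Next, I would define an \emph{atom} (or elementary set) to be a subset $A \subseteq cl(\phi)$ that is \emph{locally consistent}, meaning: exactly one of $\psi,\neg\psi$ lies in $A$ for every $\psi \in cl(\phi)$; $\psi_1\wedge\psi_2 \in A$ iff both conjuncts are in $A$ (similarly for disjunction); and crucially the local expansion laws for $\Diamond$, $\square$, and $\until$ hold, e.g.\ $\phi_1 \until \phi_2 \in A$ iff either $\phi_2 \in A$ or ($\phi_1 \in A$ and $\phi_1 \until \phi_2$ should persist to the next state). The states of $\Aa_\phi$ will be the atoms, giving at most $2^{|cl(\phi)|}$ states, which is the exponential bound. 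Transitions from atom $A$ to atom $A'$ on input letter $\Sigma_i \subseteq P$ are allowed iff (i) $A \cap P = \Sigma_i$, (ii) $\bigcirc\psi \in A \iff \psi \in A'$, and (iii) the temporal expansion formulas $\phi_1 \until \phi_2$, $\Diamond\psi$, $\square\psi$ in $A$ that are not immediately discharged in $A$ carry their obligation into $A'$ consistently with their fixpoint definitions. Initial states are atoms containing $\phi$.

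The principal obstacle is the liveness aspect: local consistency alone cannot rule out a run that keeps an obligation $\phi_1 \until \phi_2$ in every state forever without ever realizing $\phi_2$ — such a run satisfies every local rule but falsifies the least-fixpoint semantics of $\until$. I would resolve this exactly as in the standard construction by equipping $\Aa_\phi$ with a \emph{generalized B\"uchi} acceptance condition: for each subformula $\psi_i = \phi_1^i \until \phi_2^i$ of $\phi$, introduce an accepting set $F_i$ consisting of all atoms that either do not contain $\psi_i$ or do contain $\phi_2^i$; a run is accepting iff it visits each $F_i$ infinitely often, which forces every pending $\until$-obligation to be eventually discharged. A similar trick handles $\Diamond$ via $\top \until \psi$. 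I would then invoke the standard reduction from generalized to ordinary B\"uchi automata (indexing states by a counter over the acceptance sets, which multiplies the state space by $O(|\phi|)$ and hence preserves the $2^{O(|\phi|)}$ bound) to obtain the final $\Aa_\phi$. Correctness would be shown by two inductive arguments over $cl(\phi)$: if a trace $\sigma$ satisfies $\phi$, the sequence of atoms $A_i = \{\psi \in cl(\phi) : \sigma[i] \models \psi\}$ yields an accepting run; conversely, any accepting run induces, via its atom labels, a consistent satisfaction assignment for $\phi$ on $\sigma$.
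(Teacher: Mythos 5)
The paper does not prove this theorem itself---it is stated as a citation to Wolper, Vardi, and Sistla~\cite{WVS83}---and your proposal is a correct sketch of exactly the tableau construction that reference uses: closure, locally consistent atoms as states, next-state propagation of temporal obligations, generalized B\"uchi sets to discharge pending $\until$-eventualities, and degeneralization preserving the $2^{O(|\phi|)}$ bound. No gaps of substance; your treatment of $\square$ as requiring no acceptance set (being a greatest fixpoint) is implicit but consistent.
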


Based on this result, the LTL model checking for a finite Kripke structure $\Kk$
can be performed in the following manner: 
\begin{enumerate}
\item 
  Construct a B\"uchi automaton $\Aa_{\neg \phi}$ corresponding to the negation of
  the LTL property. 
\item
  Construct the composition $\Kk {\otimes} \Aa_{\neg \phi}$ of the Kripke
  structure $\Kk$ with the B\"uchi automaton $\Aa_{\neg \phi}$.
\item 
  If the B\"uchi automaton $\Hh {\otimes} \Aa_{\neg\phi}$  is empty, then return
  ``TRUE''  
\item
  Else, return a lasso-shaped (a finite prefix followed by a cycle that contains
  an accepting state) infinite run accepted by $\Hh {\otimes} \Aa_{\neg\phi}$ as a
  counter-example.
\end{enumerate}
The correctness of this algorithm follows from the observation that the set of
traces for this composition $\Kk {\otimes} \Aa_{\neg \phi}$ characterize the set
of traces that are generated by $\Kk$ that do not satisfy $\phi$.   
Hence, the Kripke structure $\Kk$ satisfies the LTL property $\phi$ if and only
if  $\Hh {\otimes} \Aa_{\neg\phi}$  is empty. 

\begin{theorem}[LTL model-Checking for Finite Structures~\cite{SC85}]
  \label{keythm}
  LTL model checking problem for finite Kripke structures is decidable in
  PSPACE. 
\end{theorem}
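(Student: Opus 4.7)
The plan is to follow the automata-theoretic recipe sketched right before the theorem, while being careful about complexity bookkeeping so that the algorithm runs in \textsc{PSpace} rather than the naive \textsc{ExpTime}. The core observation is that although the B\"uchi automaton $\Aa_{\neg \phi}$ of Theorem~\ref{thm:LTL2Buchi} can be exponential in $|\phi|$, its states admit a succinct representation (typically as consistent subsets of the closure of $\neg\phi$) so that any single state of $\Aa_{\neg \phi}$, and the transition relation between two such states, can be stored and checked in space polynomial in $|\phi|$.

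First, I would fix a standard on-the-fly construction of $\Aa_{\neg \phi}$ in which (i) a state $q$ is a subset of subformulas of $\neg\phi$ of size $O(|\phi|)$, (ii) membership of $q$ in the initial set, (iii) acceptance of $q$, and (iv) existence of a transition $q \xrightarrow{P} q'$ for a propositional label $P$, are all decidable in polynomial space in $|\phi|$. Then the product $\Kk \otimes \Aa_{\neg \phi}$ has states of the form $(s,q)$, each representable in space polynomial in $|\Kk| + |\phi|$, and one step of the product can likewise be simulated in polynomial space without ever materializing the whole product graph.

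Second, by the correctness argument already outlined in the excerpt, $\Kk \models \phi$ iff the B\"uchi automaton $\Kk \otimes \Aa_{\neg \phi}$ is empty, iff no accepting state is reachable from an initial state of the product and visible from itself via a non-trivial path. I would reduce non-emptiness to the classical \emph{lasso detection} problem, solved by the nondeterministic procedure: guess an initial $(s_0,q_0)$, guess a path to some accepting $(s,q)$, then guess a further path from $(s,q)$ back to itself; at each step store only the current product state, a length counter bounded by $|\Kk|\cdot 2^{O(|\phi|)}$, and the two ``target'' states being matched. All of these fit in polynomial space in $|\Kk|+|\phi|$, giving an \textsc{NPSpace} algorithm for non-emptiness, hence \textsc{PSpace} by Savitch's theorem. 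Complementing the answer gives a \textsc{PSpace} decision procedure for LTL model checking.

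The main obstacle is precisely the exponential size of $\Aa_{\neg\phi}$: one must never build it explicitly, and the argument that the length counter needs only polynomially many bits (because $\log(|\Kk|\cdot 2^{O(|\phi|)}) = O(|\Kk| + |\phi|)$) is what saves the complexity. A secondary subtlety is the standard fair-cycle check for generalized B\"uchi acceptance, if the translation of Theorem~\ref{thm:LTL2Buchi} is stated with multiple acceptance sets; this is handled by the textbook degeneralization gadget that introduces only a linear-in-$|\phi|$ factor and hence preserves the \textsc{PSpace} bound.
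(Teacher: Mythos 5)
Your proof is correct and follows the same automata-theoretic route the paper sketches just before the theorem (translate $\neg\phi$ to a B\"uchi automaton, form the product with $\Kk$, and test emptiness); the paper itself offers no proof beyond that sketch and a citation to Sistla--Clarke, so your contribution is to supply the missing complexity bookkeeping --- the on-the-fly representation of states of $\Aa_{\neg\phi}$, the nondeterministic lasso search with a polynomially-sized counter, and the appeal to Savitch's theorem and closure of PSPACE under complement --- all of which is the standard and correct way to sharpen the naive exponential-time algorithm to a PSPACE bound.
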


LTL model-checking for finite Kripke structures is implemented by a number of
mature tools, notably SPIN~\cite{TSPIN} and NuSMV~\cite{TNuSMV}, and has been
applied to a number of practical case-studies~\cite{TSPIN,TNuSMV}.



\subsection{Finite Bisimulation and Decidability} 
\label{sec:finiteBisum}
\label{bisim}
In this section we introduce the concept of bisimulation relation between two
Kripke structures, and show that for two bisimilar systems (systems having a
bisimulation relation between their states) we have that both systems 
have the same set of traces, and hence precisely the same set of LTL formulas
are satisfied by both of them.
Using this idea, we show that if for a given hybrid Kripke structure $\Hh$ there
exists a bisimulation relation with some finite state Kripke structure $\Kk$,
then the problem of LTL model-checking for $\Hh$ can be reduced to the decidable
problem of LTL model-checking for finite Kripke structure $\Kk$.

We say that a Kripke structure $\Kk' = (\Tt', P, L')$ can \emph{simulate} a Kripke
structure $\Kk= (\Tt, P, L)$ if every step of $\Kk$ can be matched (with 
respect to atomic propositions) by one or more steps of $\Kk'$.
A Bisimulation equivalence denotes the presence of a mutual simulation between
two structures $\Kk$ and $\Kk'$.
Formally, bisimulation relation in the following manner.
\begin{definition}[Bisimulation Relation]
  Let $\Kk = (\Tt = (S, S_0, \Sigma, \Delta), P, L)$ and $\Kk' = (\Tt = (S',
  S_0', \Sigma', \Delta'), P, L')$ be two Kripke structures.
  A bisimulation relation between $\Kk$ and $\Kk'$ is a binary relation $\Rr
  \subseteq S \times S'$ such that: 
  \begin{itemize}
  \item 
    every initial state of $\Tt$ is related to some initial state of $\Tt'$,
    and vice-versa, i.e. for every $s \in S_0$  there exists $s' \in S_0'$ such
    that $(s, s') \in \Rr$ and for every $s' \in S_0'$ there exists a $s \in
    S_0$ such that $(s, s') \in \Rr$;
  \item 
    for every $(s, s') \in {\mathcal R}$ the following holds:
    \begin{itemize}
    \item 
      $L(s)=L'(s')$,
    \item 
      every outgoing transition of $s$ is matched with some outgoing
      transition of $s'$, i.e. if $t \in \post(s)$ then there exists 
      $t' \in \post(s')$ with $(t, t') \in \Rr$, and  
    \item 
      every outgoing transition of $s'$ is  matched with some outgoing
      transition of $s$, i.e. if $t' \in \post(s')$ then there exists 
      $t \in \post(s)$ with $(t, t') \in \Rr$. 
    \end{itemize}
  \end{itemize}
\end{definition}
We say that $\Tt$ and $\Tt'$ (analogously, $\Kk$ and $\Kk'$) are bisimilar or
bisimulation equivalent, and we write $\Tt \sim \Tt'$, if there exists a
bisimulation relation $\Rr \subseteq S \times S'$. 


The following Proposition follows from the definition of bisimulation and the
semantics of LTL.
\begin{proposition}
  \label{trace-eq}
  If $\Tt \sim \Tt'$ then $\Trace(\Tt)=\Trace(\Tt')$.
  Moreover, if $\Tt \sim \Tt'$ then for every LTL formula $\phi$ we have that 
  $\Tt \models \phi$ if and only if $\Tt' \models \phi$.
\end{proposition}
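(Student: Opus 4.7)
The plan is to establish the two assertions of the proposition in sequence, with the second following almost immediately from the first. First I would prove that $\Trace(\Tt) \subseteq \Trace(\Tt')$ (the reverse inclusion is symmetric); second, I would observe that LTL satisfaction is defined purely in terms of traces, so trace equivalence yields LTL equivalence.

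For the trace inclusion, I would fix a bisimulation $\Rr \subseteq S \times S'$ and take an arbitrary run $r = \seq{s_0, a_1, s_1, a_2, s_2, \ldots}$ of $\Tt$ producing trace $\sigma = \seq{L(s_0), L(s_1), \ldots}$. I would inductively construct a matching run $r' = \seq{s_0', a_1', s_1', \ldots}$ of $\Tt'$ with $(s_i, s_i') \in \Rr$ for every $i$. For the base case, since $s_0 \in S_0$, the first bisimulation clause supplies an initial state $s_0' \in S_0'$ with $(s_0, s_0') \in \Rr$. For the inductive step, given $(s_i, s_i') \in \Rr$ and the transition $s_{i+1} \in \post(s_i)$, the second clause (the ``zig'' condition) furnishes some $s_{i+1}' \in \post(s_i')$ with $(s_{i+1}, s_{i+1}') \in \Rr$. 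Iterating this choice (invoking dependent choice for the infinite case) yields the desired run $r'$. Since $L(s_i) = L'(s_i')$ for every related pair, $\Trace(r') = \Trace(r) = \sigma$, so $\sigma \in \Trace(\Tt')$. The symmetric argument, using the ``zag'' clause and the existence of a $\Rr$-related initial state in $S_0$ for each initial state in $S_0'$, gives the opposite inclusion, establishing $\Trace(\Tt) = \Trace(\Tt')$.

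For the LTL equivalence, I would simply unfold the definition $\Tt \models \phi$ iff $\sigma \models \phi$ for all $\sigma \in \Trace(\Tt)$. Since $\Trace(\Tt) = \Trace(\Tt')$ by the first part, the universal quantifier ranges over the same set of traces in both cases, so $\Tt \models \phi$ iff $\Tt' \models \phi$. No induction on the structure of $\phi$ is needed because the semantics of LTL are purely trace-based.

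The only non-routine point is the infinite inductive construction of the matching run, where one must be careful to assemble the countably many successor choices into a single well-defined sequence; this is handled by dependent choice (or, equivalently, by defining the run as the limit of finite prefixes built by a deterministic selector on $\Rr$). Everything else is bookkeeping from the definitions.
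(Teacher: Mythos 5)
Your proposal is correct and follows essentially the same route as the paper's proof: inductively build a $\Rr$-matched run in $\Tt'$ for each run of $\Tt$ (and symmetrically), conclude trace equality from $L(s_i)=L'(s_i')$, and then derive LTL equivalence from the fact that satisfaction is defined purely over traces. You merely spell out the base case, the zig/zag steps, and the appeal to dependent choice that the paper compresses into ``a simple inductive argument.''
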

\begin{proof}
  Let $\Tt \sim \Tt'$. 
  Using a simple inductive argument, one can show that for every run $a = \seq{s_0,
    a_1, s_1, a_2, \ldots}$ of $\Tt$ there is a run $r' = \seq{s_0', a_1', s_1', a_2',
    \ldots}$ of $\Tt'$ such that $L(s_i) = L'(s_i')$ for every $i \geq 0$.
  This implies that $\Trace(r) = \Trace(r')$ and hence $\Trace(\Tt) \subseteq
  \Trace(\Tt')$. 
  Similarly, we can show that $\Trace(\Tt') \subseteq \Trace(\Tt)$.
  Hence it follows that $\Tt \sim \Tt'$ implies  $\Trace(\Tt)=\Trace(\Tt')$.
  To prove the other part of the proposition, observe 
  LTL formulae are interpreted over traces of structures, and since two
  bisimilar Kripke structures have the same set of traces, it follows that for
  every LTL formula $\phi$ we have that $\Tt \sim \Tt'$ implies that 
  $\Tt \models \phi$ if and only if $\Tt' \models \phi$.
\end{proof}
This proposition shows that LTL model checking problem can be reduced to solving
LTL model checking problem over a bisimilar Kripke structure.
We next show how to extend this idea to define bisimulation over the states of a
Kripke structure, and use it to produce a bisimilar Kripke structure with fewer
states. 
\begin{definition}[Bisimulation Relation on $\Kk$]
  Let $\Kk=(\Tt=(S, S_0, \Sigma, \Delta), P, L)$ be a Kripke structure. 
  A bisimulation on $\Kk$ is a binary relation $\Rr \subseteq S \times S$ such
  that for all $(s, s') \in \Rr$ we have that:
  \begin{itemize}
  \item 
    $L(s)=L(s')$;
  \item 
    if $t \in \post(s)$, then there exists an $t' \in \post(s')$ such that 
    $(t, t') \in \Rr$; 
 \item 
   if $t' \in \post(s')$, then there exists an $t \in \post(s)$ such that 
   $(t,t') \in \Rr$. 
\end{itemize}
\end{definition}
It is easy to see that a bisimulation relation $\Rr$ over the state space of
$\Kk$ is an equivalence relation. 
For a state $s \in S$ we write $[s]_\Rr$ for the equivalence class of $\Rr$
containing $s$.
We say that states $s, s' \in S$ are bisimulation equivalent, and we write 
$s \sim_{\Tt} s'$, if there exists a bisimulation relation $\Rr$ for $\Tt$ 
with $(s, s') \in \Rr$. 

Given a Kripke structure $\Tt$, we use a bisimulation relation $\Rr$ for
reducing the state space of $\Tt$ using the following quotient construction. 
\begin{definition}[Bisimulation Quotient]
  Given a Kripke structure $\Kk = (\Tt=(S, S_0, \Sigma, \Delta), P, L)$ and a
  bisimulation relation $\Rr \subseteq S \times S$ over $\Kk$, the bisimulation
  quotient $\Kk_\Rr$ is defined as a Kripke structure 
  $\Kk_\Rr=(\Tt_\Rr=(S_\Rr, S^0_{\Rr}, \Sigma_\Rr, \Delta_\Rr), P, L_\Rr)$
  where: 
  \begin{itemize}
  \item 
    the state space of $\Tt_\Rr$ is the quotient space of $\Tt$, i.e. 
    $S_\Rr = \set{[s]_\Rr \::\: s \in S}$;
  \item  
    the set of initial states is the set of $\Rr$-equivalence classes of the
    initial states, i.e. $S^0_\Rr=\set{[s]_\Rr \::\: s \in S_0}$;
  \item 
    $\Sigma_\Rr = \set{\tau}$;
  \item 
    Each transition $(s,a,s') \in \Delta$ induces a transition from $[s]_{\Rr}$
    to $[s']_{\Rr}$ in $\Delta_\Rr$, i.e. 
    $\Delta_\Rr=\set{([s]_\Rr, \tau, [s']_\Rr) \::\: (s, \alpha, s') \in
      \Delta}$, and
  \item 
    $L_\Rr$ is defined such that $L_\Rr([s]) = L(s)$~\footnote{Observe that the
      definition of bisimulation ensures that the state labeling $L_\Rr$ is well
      defined.}.
  \end{itemize}
  We say that a bisimulation quotient is \emph{finite} if there are finitely many
  equivalence classes of $\Rr$, i.e. $|S_\Rr| < \infty$.
\end{definition}
%
%
The proof of the following theorem is immediate from Proposition~\ref{trace-eq}
and Theorem~\ref{keythm}.
\begin{theorem}
  \label{thm:finiteKSDec}
  The existence of a finite bisimulation quotient for a hybrid Kripke structure
  imply the decidability of LTL model-checking problem.
\end{theorem}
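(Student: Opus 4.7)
The plan is to combine the two preceding results into a straightforward reduction. Given a hybrid Kripke structure $(\Hh, P, L)$ whose semantics Kripke structure $(\sem{\Hh}, P, L)$ admits a finite bisimulation relation $\Rr$, I will first form the quotient $\Kk_\Rr$ via the construction already given, then argue that $\sem{\Hh}$ and $\Kk_\Rr$ satisfy exactly the same LTL formulas, and finally appeal to Theorem~\ref{keythm} to decide LTL model-checking over the finite structure $\Kk_\Rr$.

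The first step is to verify that the natural map $s \mapsto [s]_\Rr$ witnesses a bisimulation between $\sem{\Hh}$ and $\Kk_\Rr$ in the sense of the two-structure definition. Concretely, I would take the relation $\Rr^* = \set{(s, [s]_\Rr) : s \in S^\Hh}$ and check each clause in turn: every initial state $s \in S_0^\Hh$ is related to the initial class $[s]_\Rr \in S^0_\Rr$, and vice versa; labels agree since $L_\Rr$ is defined so that $L_\Rr([s]_\Rr) = L(s)$; every transition of $\sem{\Hh}$ from $s$ induces, by construction of $\Delta_\Rr$, a matching transition from $[s]_\Rr$; and conversely, every transition of $\Kk_\Rr$ out of $[s]_\Rr$ comes from some transition in $\sem{\Hh}$ from a representative $\hat s \sim_\Rr s$, which by the ``back'' clause of the bisimulation $\Rr$ on $\sem{\Hh}$ can be mimicked by a transition from $s$ itself. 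This routine check gives $\sem{\Hh} \sim \Kk_\Rr$.

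Once bisimilarity is established, Proposition~\ref{trace-eq} immediately yields $\Trace(\sem{\Hh}) = \Trace(\Kk_\Rr)$, and hence for every LTL formula $\phi$,
\[
\sem{\Hh} \models \phi \iff \Kk_\Rr \models \phi.
\]
Since $\Kk_\Rr$ is finite by hypothesis, Theorem~\ref{keythm} supplies a (PSPACE) procedure deciding the right-hand side, which therefore decides the left-hand side as well. This completes the reduction.

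The main subtlety, rather than a technical obstacle, is the matter of effectiveness: to actually run the decision procedure one must not merely know that a finite quotient exists but be able to compute a representation of it (its finite state set, its transitions, and the labelling). Strictly speaking the theorem is a semantic implication, and I would flag this in the write-up by noting that in all three concrete subclasses treated in Section~\ref{sec:decidability}---timed automata, initialized rectangular hybrid automata, and two-dimensional piecewise-constant derivative systems---the quotient is produced by an explicit, computable equivalence (for instance the region construction), so the existence hypothesis is witnessed constructively and no additional assumption is needed.
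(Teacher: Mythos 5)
Your proposal is correct and follows essentially the same route as the paper, which simply declares the theorem ``immediate from Proposition~\ref{trace-eq} and Theorem~\ref{keythm}''; you merely fill in the routine verification that the graph of $s \mapsto [s]_\Rr$ is a bisimulation between $\sem{\Hh}$ and $\Kk_\Rr$. Your closing remark on effectiveness (that one must be able to \emph{compute} the finite quotient, not merely know it exists) is a legitimate point the paper leaves implicit, but it does not alter the argument.
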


\section{Decidable Subclasses of Hybrid Automata}
\label{sec:decidability}
Given the expressiveness of hybrid automata it is not surprising that simple
reachability questions are undecidable for general hybrid kripke structures.
In this section we discuss some prominent subclasses of hybrid automata for
which LTL model checking problem is decidable.
In the previous section we discussed that  showing the existence of a finite
bisimulation quotient guarantees decidable model-checking.
Timed automata were among the first hybrid automata shown to have decidable
model-checking using this approach. 
We begin this section by presenting timed automata and discuss this bisimulation
known as region-equivalence relation. 
We will also review multi-rate and rectangular hybrid automata
(Section~\ref{sec:rectangular}) that under certain restriction (initialized)
recover decidability of LTL model-checking  via reductions to similar problem on
timed automata.  
Finally, in Section~\ref{sec:MMS} we discuss a relatively simple class of hybrid
systems, called piecewise-constant derivative systems, that capture the essence
of undecidability and provide references to its variants that permit algorithmic
analysis.
\subsection{Timed Automata}
\label{sec:timedAutomata}
Timed automata, introduced by Alur and Dill~\cite{AD90,AD94}, is a popular
formalism to model real-time systems. 
A timed automaton is a hybrid automaton where all variables grow with a constant
and uniform rate (for all variables $x {\in} X$ we have that $\dot{x} = 1$) and
the only jump permitted during the discrete transitions is reset to zero. 
Moreover, the set of predicates permitted to appear as guard on transitions is
restricted to the following kind of octagonal predicates:  
\begin{equation}
g:= x \bowtie c | x-y \bowtie c |g \wedge g\label{guards}
\end{equation}
where $x,y$ are clock variables, $\bowtie \in \{<, \leq, =, >, \geq\}$ and  
$c \in \mathbb{N}$. 
We write $\zones(X)$ for this class of octagonal predicates over the set $X$.
Formally, we define a timed automata as a restriction of hybrid automata in the
following manner.
\begin{definition}[Timed Automata: Syntax]
  A timed automaton is a hybrid automaton 
  $\Tt= (M, M_0, \Sigma, X, \Delta, I, F, V_0)$ with the following restrictions:
  \begin{itemize}
  \item the transition relation 
    $\Delta \subseteq M \times \pred(X) \times \Sigma \times \pred(X\cup X')
    \times M$ is such that if $(m, g, a, j, m') \in \Delta$ then 
    \begin{itemize}
    \item the guard $g$ is of the form (\ref{guards}), i.e. $g  \in \zones(X)$
      and  
    \item the jump predicate $j$ only permits variable resets to zero, i.e. $j$
      is of the form
      \[ 
      \wedge_{x \in Y} (x'{=}0),
      \]
      for some $Y \subseteq X$. We denote such set $Y$ as $\rst(j)$.
    \end{itemize}
  \item 
    the mode-invariant function $I: M \to \pred(X)$ is such that for all
    $m \in M$ we have that $I(m) \in \zones(X)$;
  \item 
    the flow function $F: M {\to} (\Real^{|X|} {\to} \Real^{|X|})$
    is such that for all $m \in M$ we have that $F(m)$ characterizes:
    \[
    \wedge_{x \in X} (\dot{x} = 1), \text{ and }
    \]
  \item $V_0 \in \pred(X)$ is the set of initial valuations is such that
    $V_0 = \wedge_{x \in X} (x=0)$.
  \end{itemize}
\end{definition}

The semantics of timed automata and the concept of timed Kripke structures is
defined is a similar way as for hybrid automata. 

\begin{example}
  The hybrid automaton corresponding to the job-shop scheduling problem, shown
  in Figure~\ref{job2}, can also be modeled as a timed automaton by requiring
  that the rates of variables $x_1$ and $x_2$ is $1$ in all the modes (unlike
  the current example where these clocks are paused in certain modes).
\end{example}

\begin{example}
  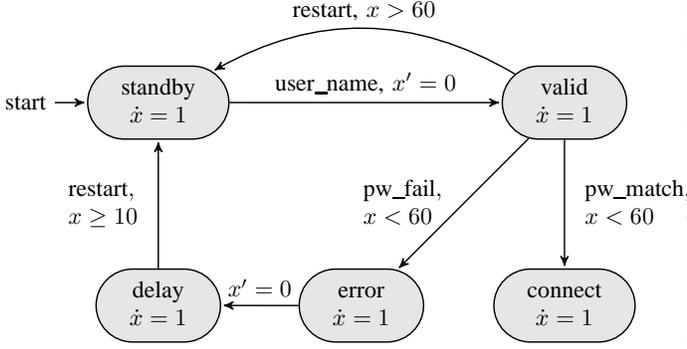
\begin{figure}[t]
\begin{center}
  \scalebox{0.9}{
\begin{tikzpicture}[->,>=stealth',shorten >=1pt,auto,node distance=2.8cm,
  semithick]
  \tikzstyle{every state}=[fill=gray!20!white,minimum size=3em,rounded rectangle]
  
  \node[initial,state] (A) {$\begin{array}{c}\text{standby}\\ \dot{x} = 1 \end{array}$} ;
  
  \node[state] at (6,0) (B) {$\begin{array}{c}\text{valid}\\ \dot{x} = 1 \end{array}$} ;
  \node[state] at (0,-3) (C) {$\begin{array}{c}\text{delay}\\ \dot{x} = 1 \end{array}$} ;
  \node[state] at (3,-3) (D) {$\begin{array}{c}\text{error}\\ \dot{x} = 1 \end{array}$} ;
  \node[state] at (6,-3) (E) {$\begin{array}{c}\text{connect}\\ \dot{x} = 1 \end{array}$} ;
  
  \path (A) edge node [above]{user\_name, $x'=0$} (B);
  \path (B) edge [bend right] node [above]{restart, $x > 60$} (A);

  \path (C) edge node [left]{$\begin{array}{ll} \text{restart}, \\ x\geq10 \end{array}$}(A);
  \path (D) edge node [above]{$x'=0$} (C);
  \path (B) edge node [left]{$\begin{array}{ll} \text{pw\_fail},\\ x<60\end{array}$} (D);
  \path (B) edge node [right]{$\begin{array}{ll} \text{pw\_match},\\ x<60\end{array}$} (E);

\end{tikzpicture}
}
\caption{A time-sensitive login protocol implemented as a timed automaton}
\label{ta}
\end{center}
\end{figure}
  As an example of a timed automaton consider Figure~\ref{ta} that models a
  login protocol using a timed automaton.  
  The system starts in the ``standby'' mode. 
  If the user gives a correct password within $60$ time-units after giving 
  the user name, a connection will be established; if, however, the password
  given is wrong, the system restarts after a delay of at least $10$ time units.
  Moreover, if no password is given within $60$ time units after supplying user
  name, then the system restarts in the standby mode.
  This system is modeled using a timed automaton with five modes and one clock
  in Figure~\ref{ta}. 
\end{example}

Alur and Dill~\cite{AD94} proposed the notion of region equivalence to define
a bisimulation relation over the timed Kripke structures $(\sem{\Tt}, P, L)$.  
We say that two clock valuations $\nu$ and $\nu'$ are \emph{region
  equivalent}, and we write $\nu \sim_R \nu'$, if and only if all
clocks have the same integer parts in $\nu$ and~$\nu'$, and if the
partial orders of the clocks, determined by their fractional parts in 
$\nu$ and $\nu'$, are the same.  

\begin{definition}[Region Equivalence]
  Let $\Tt$ be a timed automaton and let $K$ be the
  maximum constant used in the guards of $\Tt$.
  We say that two clock valuations $\nu$ and $\nu'$ are \emph{region
    equivalent}, and we write $\nu \sim_R \nu'$ if and only if:
  \begin{itemize}
  \item 
    either for $x \in X$ we have $\nu(x) {>} K$ and $\nu'(x)  {>} K$, or 
  \item 
    for any $x, y \in X$ with $\nu(x), \nu'(x) \leq K$ and $\nu(y), \nu'(y) \leq
    K$ the following conditions hold:
    \begin{itemize}
    \item 
     $\floor{\nu(x)} = \floor{\nu'(x)}$, and $\fract{\nu(x)} = 0$ iff
     $\fract{\nu'(x)} = 0$,
    \item 
     $\floor{\nu(y)} = \floor{\nu'(y)}$, and $\fract{\nu(y)} = 0$ iff
     $\fract{\nu'(y)} = 0$,
   \item
     $\fract{\nu(x)}  \leq \fract{\nu(y)}$ if and only if
     $\fract{\nu'(x)}  \leq \fract{\nu'(y)}$,
   \end{itemize}
   where  $\fract{c} \rmdef (c - \floor{c})$ represents the fractional part of
   $c \in \Rplus$. 
 \end{itemize}
\end{definition}

It is easy to see that $\sim_R$ is an equivalence relation. 
For a clock valuation $\nu$ we write $[\nu]$ for the region equivalence class
of $\nu$.
Region equivalence relation can be extended from valuations to configurations
of a timed automaton $\Tt$ in a straightforward manner:
we say that two configurations $(m, \nu)$ and $(m',\nu')$ are region equivalent,
and we write $[(m, \nu)] = [(m', \nu')]$, if and only if $m = m'$ and 
 $[\nu] = [\nu']$. 

Alur and Dill~\cite{AD94} showed that region equivalence relations characterize
finite bisimulation quotients for timed Kripke structures by showing that the
number of equivalence classes for a timed automaton $(M, M_0, \Sigma, X, \Delta,
I, F, V_0)$ are bounded from above by $|M|\cdot|X|!\cdot 2^{|X|}\cdot
\prod_{i=1}^{|X|} \cdot (2.K+2)$.
\begin{theorem}[\cite{AD94}]
  Region equivalence relation characterizes a finite bisimulation quotient for
  timed Kripke structures.  
\end{theorem}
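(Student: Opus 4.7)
The plan is to establish two things: (a) that $\sim_R$ lifted to configurations is indeed a bisimulation on the timed Kripke structure $(\sem{\Tt}, P, L)$, and (b) that the number of equivalence classes is finite, bounded by the stated expression. The two ingredients together, together with Theorem~\ref{thm:finiteKSDec}, will yield the result.

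For the bisimulation property, I would fix two region-equivalent configurations $(m, \nu) \sim_R (m, \nu')$ and show that every labelled transition of the timed Kripke structure out of $(m,\nu)$ can be matched from $(m,\nu')$ with the equivalence preserved. First note that the labels agree since the mode components coincide, so $L(m)=L(m)$. The timed Kripke structure has two flavours of outgoing moves, corresponding to a discrete edge $(m,g,a,j,m')\in \Delta$ and to a time elapse $t\in\Rplus$ combined with such an edge, as in Definition~\ref{ha:sem}. The core technical claim I would isolate and prove as a lemma is a \emph{time-abstract bisimulation} property:
\begin{itemize}
\item if $\nu \sim_R \nu'$ and $t \in \Rplus$, then there exists $t' \in \Rplus$ such that $(\nu \pplus{F(m)} t) \sim_R (\nu' \pplus{F(m)} t')$ and, moreover, the two trajectories $\{\nu \pplus{F(m)} \tau : 0 \le \tau \le t\}$ and $\{\nu' \pplus{F(m)} \tau : 0 \le \tau \le t'\}$ pass through exactly the same sequence of regions.
\end{itemize}
Because in a timed automaton $F(m)$ is simply $\dot{x}=1$ for every clock, this is essentially the classical ``regions respect time passage'' argument: the fractional parts of the clocks below the threshold $K$ reach integer values in a common order dictated only by the region, and the dwell time $t'$ can be chosen so that $\nu'+t'$ lands in the same region as $\nu+t$. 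Given this lemma, the bisimulation conditions follow easily: the guard $g$ is a zone predicate in $\zones(X)$, and a direct syntactic check on each atomic constraint $x\bowtie c$ or $x-y\bowtie c$ with $c\le K$ shows that region-equivalent valuations satisfy exactly the same guards and invariants; the reset $j$ of the form $\bigwedge_{x\in\rst(j)} x'{=}0$ clearly preserves region equivalence since zeroing out the same coordinates in two region-equivalent valuations yields region-equivalent results; finally the invariant of the target mode $I(m')\in\zones(X)$ is likewise region-respected. Symmetrically, every move out of $(m,\nu')$ is matched by one out of $(m,\nu)$.

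For finiteness, I would enumerate the equivalence classes directly by identifying each class with a combinatorial description. A class is determined by: the mode $m$ (giving the factor $|M|$); for each clock $x\in X$, either the fact that $\nu(x)>K$ or the integer value $\lfloor \nu(x) \rfloor\in\{0,1,\dots,K\}$ together with the bit indicating whether $\fract{\nu(x)}=0$ (which gives $2K+2$ possibilities per clock, hence the product $\prod_{i=1}^{|X|}(2K+2)$); and finally the total order on the fractional parts of those clocks whose value is at most $K$, which contributes at most $|X|!$ orderings, with the additional $2^{|X|}$ factor absorbing choices about ties with zero. Multiplying these bounds gives exactly $|M|\cdot|X|!\cdot 2^{|X|}\cdot\prod_{i=1}^{|X|}(2K+2)$, so $|S_{\sim_R}|<\infty$.

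I expect the main obstacle to be the time-elapse matching lemma, specifically the need to choose $t'$ for arbitrary $t$ and to ensure that the \emph{entire} trajectory from $\nu'$ to $\nu'+t'$ visits the same sequence of regions as the trajectory from $\nu$ to $\nu+t$, so that the invariant condition $(\nu\pplus{F(m)}\tau)\in\sem{I(m)}$ along $[0,t]$ lifts to $[0,t']$. The argument is standard but requires careful bookkeeping on the order in which fractional parts cross integer boundaries, using the fact that all clocks progress at the same unit rate and hence the \emph{relative} fractional ordering within a region changes only when some fractional part wraps around zero. Once this is nailed down, the remainder of the proof is a matter of matching transitions predicate-by-predicate using the zone form of guards and invariants, and the counting is routine.
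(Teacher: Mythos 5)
Your proposal follows the standard Alur--Dill argument, which is exactly what this paper is invoking: the paper itself offers no proof of the theorem beyond citing \cite{AD94} and quoting the counting bound in the preceding sentence, so your task was to reconstruct that argument, and in outline you have done so correctly. The decomposition into (a) a time-elapse matching lemma showing that $\sim_R$ is a time-abstract bisimulation respecting guards, invariants and resets, and (b) a direct enumeration of the equivalence classes, is the right one, and your identification of the time-elapse lemma (with the requirement that the two trajectories traverse the same sequence of regions, so that invariants lift) as the technical core is accurate.

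One step, however, does not go through as you state it. You claim that a syntactic check shows region-equivalent valuations satisfy the same atomic constraints $x - y \bowtie c$. With region equivalence as defined in this paper---which records only integer parts up to $K$ and the ordering of fractional parts, i.e.\ the \emph{diagonal-free} regions---this fails once a clock exceeds $K$: take $K = 2$, $\nu = (x \mapsto 2.1,\ y \mapsto 1.9)$ and $\nu' = (x \mapsto 100,\ y \mapsto 1.9)$. These are region-equivalent ($\nu(x), \nu'(x) > K$, and $y$ agrees in integer part and fractional behaviour), yet $\nu$ satisfies $x - y \leq 1$ while $\nu'$ does not. Since the guard grammar (\ref{guards}) admits diagonal constraints, the theorem as literally stated needs either a refinement of $\sim_R$ that additionally tracks the bounded values of clock differences---which remains finite and only changes your count by a bounded factor---or the standard preprocessing that eliminates diagonal constraints from the automaton. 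For diagonal-free guards $x \bowtie c$ your argument is correct as written. Aside from this, and a harmless bookkeeping slip in the $2K+2$ factor (the case $\nu(x) > K$ should be one of the $2K+2$ intervals $[0,0], (0,1), \ldots, [K,K], (K,\infty)$ rather than an extra alternative on top of them), the proof is sound.
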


This theorem combined with Theorem~\ref{thm:finiteKSDec} proves the decidability
of LTL model checking for timed Kripke structures.
The complexity of LTL model checking was considered by Courcoubetis and
Yannakakis~\cite{CY92} who showed that simple reachability problem for timed
Kripke structures with three or more clocks is PSPACE-complete.
Despite the high computational complexity of verification, algorithms based on
region equivalence relation coupled with clever data-structures~\cite{UPPAAL} to
symbolically represent sets of regions have been shown to perform well in
practice on medium-sized applications~\cite{UP01,CJLRR09}.   
UPPAAL~\cite{TUppaal}, KRONOS~\cite{TKronos}, and RED~\cite{TRED} are some of
the leading tools that can perform timed automata based verification.
The theory of timed automata has also been extended in several directions
to allow them to model more realistic real-time systems, e.g. real-time systems
with cost and rewards~\cite{LBBFHPR01,BFHLPRV01,BBL08,RLS06,JT08},
uncontrollable nondeterminism~\cite{AMP95,AMPS98,AM99,ABM04,Bou06,BHPR07},
stochastic behavior~\cite{ACD91,KNSS00,Bea03,KNPS06,KNSW07,JLS08,MLK10,BF09},
and recursion~\cite{TW10a,AAS12}.   
For a detailed overview of these extensions we refer to~\cite{Waez13}. 

\subsection{Multi-Rate and Rectangular Hybrid Automata}
\label{sec:rectangular}
Multi-rate hybrid automata, introduced by Henzinger and
Kopke~\cite{HK99,PV94,HKPV98}, are a subclass of hybrid automata where the
dynamics of variables is restricted to constant rates. 
However,  unlike timed
automata, different variables can have different rates, and it can vary among
different modes. 
Moreover, during discrete transitions these variables can be reseted to
real numbers.
Also in a multi-rate hybrid automaton the set of predicates permitted to appear
as guard on transitions is restricted to the following kind of rectangular
predicates:    
\begin{equation}
  g:= c' \bowtie x \bowtie c,\label{guards2}
\end{equation}
where $x$ is a variable, $\bowtie \in \{<, \leq, =, >, \geq\}$ and  
$c, c' \in \mathbb{N}$. 
We write $\rect(X)$ for this class of rectangular predicates over the set $X$. 
Formally, we define a multi-rate hybrid automata as a restriction of hybrid
automata in the following manner.
\begin{definition}[Multi-rate Hybrid Automata: Syntax]
  A multi-rate hybrid automaton is a hybrid automaton 
  $\Hh= (M, M_0, \Sigma, X, \Delta, I, F, V_0)$ with the following restrictions:
  \begin{itemize}
  \item the transition relation 
    $\Delta \subseteq M \times \pred(X) \times \Sigma \times \pred(X\cup X')
    \times M$ is such that if $(m, g, a, j, m') \in \Delta$ then 
    \begin{itemize}
    \item the guard $g$ is of the form (\ref{guards2}), i.e. $g  \in \rect(X)$
      and  
    \item the jump predicate $j$ only permits variable resets to real numbers,
      i.e.  $j$ is of the form 
      \[ 
      \wedge_{x \in Y} (x'{=} c_x)
      \]
      where $Y \subseteq X$ and $c_x \in \Int$ for each $x \in Y$.
      We denote such set $Y$ as $\rst(j)$.
    \end{itemize}
  \item 
    the mode-invariant function $I: M \to \pred(X)$ is such that for all
    $m \in M$ we have that $I(m) \in \rect(X)$;
  \item 
    the flow function $F: M {\to} (\Real^{|X|} {\to} \Real^{|X|})$ 
    is such that for all $m \in M$ we have that $F(m)$ characterize:
    \[ 
    \wedge_{x \in X} (\dot{x} = c_{x,m}),
    \]
    where $c_{x,m} \in \Int$ for each $x \in X$;  and 
  \item $V_0 \in \pred(X)$ is the set of initial valuations is such that
    $V_0 = \wedge_{x \in X} x=0$.
  \end{itemize}
\end{definition}

The semantics of multi-rate automata and the concept of multi-rate Kripke
structures is defined is a similar way as for hybrid automata. 
Rectangular hybrid automata~\cite{HK99,HKPV98} are a generalization of
multi-rate hybrid automata where within each mode the rate of a variable can
change non-deterministically within a given mode-dependent interval.

Using a reduction from two counter Minsky machine, one can easily show that the
LTL model checking problem for multi-rate hybrid automata is undecidable. 
\begin{theorem}[\cite{HKPV98}]
  LTL model-checking problem for multi-rate hybrid automata is undecidable.
\end{theorem}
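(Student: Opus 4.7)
The plan is to mirror the reduction used in the proof of Theorem~\ref{undec}, encoding the halting problem for two-counter Minsky machines~\cite{Min67} into LTL model-checking. The difficulty is that the construction in Theorem~\ref{undec} relies on a diagonal guard of the form $z = z_1$ and on encoding the counter values multiplicatively as $1/2^{c}$; neither is admissible in a multi-rate hybrid automaton, whose guards and invariants must be rectangular and whose resets must be to integer constants. The extra expressive power over timed automata that we can exploit is the ability to assign different integer rates, including negative ones, to different variables in different modes, which is enough to simulate counter increment and decrement directly on the counter values.

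First, I would use two variables $x_1, x_2$ to encode the counters directly, maintaining the invariant $x_i = c_i$ throughout the simulated run, together with one auxiliary clock $y$. For each instruction $\ell_i$ of the Minsky machine I place a ``location'' mode in $\Hh$ with rates $\dot{x_1} = \dot{x_2} = 0$, $\dot{y} = 1$ and rectangular invariant $y \leq 0$, so that every visit to $\ell_i$ is instantaneous; the initial mode is the one for $\ell_0$ with $V_0 : x_1 = x_2 = y = 0$, and the mode for the terminal instruction $\ell_n$ is the unique mode labelled with a fresh proposition $\textsf{halt}$ (with a self-loop to ensure an infinite trace). An increment instruction $\ell_i : c_1 := c_1+1;\, \text{goto}\,\ell_k$ is implemented by an intermediate mode $A_i$ with flow $\dot{x_1} = 1$, $\dot{x_2} = 0$, $\dot{y} = 1$ and rectangular invariant $y \leq 1$, a transition $\ell_i \to A_i$ resetting $y$ to $0$, and a transition $A_i \to \ell_k$ guarded by the rectangular predicate $y = 1$ and resetting $y$ to $0$. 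The invariant/guard pair forces exactly one time unit to be spent in $A_i$, increasing $x_1$ by one while leaving $x_2$ untouched. A decrement-test instruction is translated analogously, using the mutually exclusive rectangular guards $x_1 = 0$ and $x_1 > 0$ out of $\ell_i$ and, in the positive case, routing through an intermediate mode with $\dot{x_1} = -1$. All flows are constant integers (hence Lipschitz-continuous), all guards and invariants are rectangular, and all resets are to the integer $0$, so $\Hh$ is indeed a multi-rate hybrid automaton.

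To close the reduction, I take the LTL formula $\phi \rmdef \Diamond\,\textsf{halt}$ and claim that $\Hh \models \phi$ if and only if the Minsky machine halts. The key point is that by construction $\Hh$ is deterministic: location modes force zero delay by their invariant $y \leq 0$ combined with $\dot y = 1$, intermediate modes force exactly one unit of delay by the pairing of invariant $y \leq 1$ with exit guard $y = 1$, and the two outgoing transitions of a decrement-test are selected by mutually exclusive guards on $x_1$. Consequently, every trace of $\Hh$ coincides with the simulation of the unique run of the Minsky machine, and it reaches $\textsf{halt}$ iff the machine halts. The main obstacle in turning this plan into a rigorous proof is precisely the bookkeeping of this determinism and faithfulness claim: one has to verify that no ``parasitic'' behaviour of $\Hh$---an unintended delay in a location mode, an early or late exit from an intermediate mode, or any violation of the encoding invariant $x_i = c_i$---can create a trace that reaches $\textsf{halt}$ when the machine does not, or conversely. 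This follows from the careful pairing of rectangular invariants and guards on the auxiliary clock $y$ together with the fact that $x_1$ and $x_2$ have rate $0$ in every mode except the one module currently advancing them, but it is where essentially all of the technical care of the proof is concentrated.
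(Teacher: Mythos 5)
Your construction is correct, and it takes a genuinely different route from the one the paper gestures at. The paper offers no explicit proof for this theorem---it merely remarks that a Minsky-machine reduction ``easily'' works and cites \cite{HKPV98}---and the only worked-out reduction in the paper (Theorem~\ref{undec}) encodes counter values as $x_i = 1/2^{c_i}$ and crucially uses the diagonal guard $z = z_1$ together with a rate-$2$ variable to halve a clock value; as you correctly observe, that guard is octagonal rather than rectangular, so that construction does not literally live inside the multi-rate class as the paper defines it. Your alternative---storing the counter values directly as integer valuations of stopwatch-like variables (rate $0$ in location modes, rate $\pm 1$ for exactly one time unit in increment/decrement gadgets, zero-tests via the mutually exclusive rectangular guards $x_1 = 0$ and $x_1 > 0$)---uses only rectangular guards and invariants, integer resets, and constant integer rates, so it genuinely witnesses undecidability for the class at hand; it is essentially the classical stopwatch-automata argument of \cite{HKPV98}. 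What each approach buys: the $1/2^c$ wheel encoding keeps all variables bounded in $[0,1]$ and is the one that survives when one wants to push undecidability into classes with bounded variables, whereas your direct encoding is simpler and more transparent but relies on unbounded variable values and on rates $0$ and $-1$, i.e.\ precisely the non-initialized multi-rate features. Your identification of where the technical care lies (determinism of the gadgets via the invariant/guard pairing on $y$, and preservation of the encoding invariant $x_i = c_i$) is the right checklist; the only cosmetic point worth adding is that the self-loop on the \textsf{halt} mode is needed because the paper's traces are defined only for infinite runs, which you already handle.
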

  

We say that a multi-rate (or rectangular) hybrid automaton is {\it initialized}
if it satisfies the property that every transition between two modes with
different rates (rate intervals, resp.) for a variable, resets that variable,
i.e. for every transition $(m, g, a, j, m') \in \Delta$ with $F(m)(x) \neq
F(m')(x)$ we have $x \in \rst(j)$. 
Figure \ref{rect} shows an initialized rectangular  automaton.
\begin{figure}[t]
\begin{center}
\begin{tikzpicture}[->,>=stealth',shorten >=1pt,auto,node distance=2.8cm,
  semithick]
  \tikzstyle{every state}=[fill=gray!20!white,minimum size=3em,rounded rectangle]
    \node[initial, 
    initial text={},state,fill=gray!30] at (-4,0) (A) {$\dot{x} \in [1,2]$} ;
   \node[state,fill=gray!20] at (-1,0) (B) {$\dot{x} \in [1,2]$} ;
    \node[state,fill=gray!20] at (2,0) (C) {$\dot{x} \in [7,8]$} ;
\path(A) edge [bend left=10] node [above] {$x \leq 10?$} (B);
 \path(B) edge [bend left=10] node [above] {$x'=3$} (C);
\path(C) edge [bend left=10] node [above] {$x'=0$} (B);
 \end{tikzpicture}
\caption{An initialized rectangular automaton}
\label{rect}
\end{center}
\end{figure}
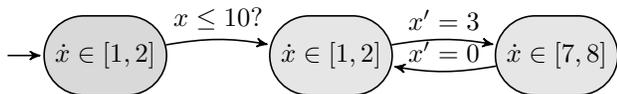

Henzinger et al.~\cite{HKPV98} showed the decidability of initialized
rectangular and multi-rate hybrid automata. 

\begin{theorem}
 \label{bisim:ra}
 The LTL model-checking problem for initialized rectangular (multi-rate) hybrid
 automata is decidable. 
\end{theorem}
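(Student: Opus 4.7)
The plan is to establish decidability by reducing the LTL model checking problem for initialized rectangular (and multi-rate) hybrid automata to the same problem on timed automata, for which decidability was already established via the region-equivalence bisimulation (Theorem~\ref{thm:finiteKSDec} together with the region-equivalence result). The whole argument hinges on the \emph{initialization} assumption: whenever a transition changes the dynamics of a variable $x$, that variable is forced to be reset, which means that the ``slope history'' of $x$ can be forgotten at mode changes. This is exactly what lets us rescale variables locally without creating inconsistencies across transitions.

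First, I would handle the initialized multi-rate case. For each mode $m$ and each variable $x$ with rate $F(m)(x) = c_{x,m} \in \Int$, introduce a rescaled variable $\tilde{x}_{m}$ which behaves as a clock (rate $1$) within $m$. More precisely, build a timed automaton $\Tt^{\Hh}$ with the same set of modes and, for each mode $m$, replace every guard/invariant constraint $c' \bowtie x \bowtie c$ occurring along a transition out of $m$ (or in the invariant of $m$) by the constraint $c'/c_{x,m} \bowtie x \bowtie c/c_{x,m}$ on the corresponding clock; every reset $x' := c_x$ becomes the appropriate clock reset, and the flow is replaced by $\dot{x}=1$. Scaling all constants by the least common multiple of the rates clears denominators and produces an honest timed automaton. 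Because $\Hh$ is initialized, whenever a transition enters a mode with a different rate the variable is reset, so the reinterpretation is locally consistent and induces a bisimulation between $\sem{\Hh}$ and $\sem{\Tt^{\Hh}}$ that preserves atomic propositions on modes. By Proposition~\ref{trace-eq} the two structures satisfy the same LTL formulas, and Theorem~\ref{keythm} (via the region construction for timed automata) then yields decidability.

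For the initialized rectangular case the same idea is used but each variable $x$ must be split to track the two endpoints of its reachable interval. Replace $x$ by two auxiliary variables $x^{\mathrm{lo}}, x^{\mathrm{hi}}$, and in every mode $m$ where $\dot{x} \in [a,b]$ let $\dot{x}^{\mathrm{lo}} = a$ and $\dot{x}^{\mathrm{hi}} = b$. Guards of the form $x \le c$ become $x^{\mathrm{lo}} \le c$ and guards $x \ge c$ become $x^{\mathrm{hi}} \ge c$, while a reset $x := c$ becomes $x^{\mathrm{lo}} := c,\, x^{\mathrm{hi}} := c$. Again, initialization guarantees that on any transition changing the rate interval of $x$, both $x^{\mathrm{lo}}$ and $x^{\mathrm{hi}}$ are reset, so the new pair of variables consistently represents the set of reachable $x$-values without the ``history'' of previous rates leaking through. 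After this encoding one is left with an initialized multi-rate automaton, which by the first step reduces to a timed automaton.

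The main obstacle, and the step that requires the most care, is constructing the bisimulation witnessing that the rescaled timed automaton $\Tt^{\Hh}$ has the same traces as $\Hh$ in the rectangular case: one must argue that every value of $x$ realizable in $\Hh$ at a given instant corresponds to a valuation of $(x^{\mathrm{lo}}, x^{\mathrm{hi}})$ in $\Tt^{\Hh}$ satisfying the appropriate inequalities, and conversely that every trajectory of $\Tt^{\Hh}$ can be matched by a genuine flow of $\Hh$ picking a rate in $[a,b]$. This matching relies crucially on convexity of rectangular flows and on the initialization hypothesis to prevent accumulated discrepancies. Once that bisimulation is in hand, Theorem~\ref{thm:finiteKSDec} combined with the finite bisimulation quotient for the resulting timed automaton delivers the decidability statement.
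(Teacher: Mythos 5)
Your overall strategy coincides with the paper's: first reduce initialized multi-rate automata to timed automata by rescaling each variable's rate to one (adjusting guard and invariant constants accordingly), and then reduce initialized rectangular automata to initialized multi-rate automata by replacing each variable $x$ with $\dot{x}\in[a,b]$ by a lower-bound tracker of rate $a$ and an upper-bound tracker of rate $b$. So this is the same two-step decomposition the paper uses.

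There is, however, one concrete gap in your rectangular-to-multi-rate step. You translate a guard $x \le c$ simply as $x^{\mathrm{lo}} \le c$ (and $x \ge c$ as $x^{\mathrm{hi}} \ge c$), which correctly captures \emph{enabledness} (some realizable value of $x$ in $[x^{\mathrm{lo}},x^{\mathrm{hi}}]$ satisfies the guard), but you never update the bound variables after the guard is crossed. If the transition is taken with $x^{\mathrm{lo}}=5$ and $x^{\mathrm{hi}}=15$ under guard $x\le 10$, the set of values of $x$ consistent with having passed the guard is $[5,10]$, not $[5,15]$; without truncating $x^{\mathrm{hi}}$ to $10$, the pair $(x^{\mathrm{lo}},x^{\mathrm{hi}})$ strictly over-approximates the reachable values of $x$, so later guards (say $x\ge 12$) could be spuriously enabled and the construction is only a one-directional simulation, not the bisimulation your argument needs. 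The paper's sketch includes exactly this clamping: the guard $x\le 10$ is translated as $x_l\le 10$ together with the case $x_u>10$ accompanied by the update $x_u'=10$ (and symmetrically for lower bounds). Adding that truncation to your translation repairs the argument; the convexity observation you make at the end is the right reason why the clamped interval exactly characterizes the reachable values. A smaller issue, which the paper also glosses over, is that your rescaling by $1/c_{x,m}$ in the multi-rate step is undefined for variables with rate zero or problematic for negative rates; these cases need separate (routine) treatment.
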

\begin{proof}
  The decidability of LTL model-checking problem for initialized multi-rate
  automata by reducing the problem to similar problem for timed automata by
  rescaling the rate of all variables to one via appropriate adjustment of the
  constraints on the mode invariants and guards in all the transitions.  
  
  To prove the decidability for an initialized rectangular automaton $\Hh_r$, we
  reduce the problem to corresponding problem for an initialized multi-rate
  automaton $\Hh_m$. 
  Each variable $x$ of $\Hh_r$ with rate in the rectangle  $a \leq \dot{x} \leq
  b$ is simulated using two variables $x_l, x_u$ such that $\dot{x_l}=a$ and
  $\dot{x_u}=b$.  
  The variables $x_l, x_u$ keep track of the lower and upper bounds of $x$
  respectively.  
  With this replacement, the invariant conditions of modes, as well as guards
  and resets on transitions have to be adjusted appropriately.  
  For example, if we had a transition with guard $x \leq 10$, then it is
  replaced with  (i) $x_l \leq 10$ and (ii)$x_u > 10, x'_u=10$. 
  This conversion from initialized rectangular to initialized multirate automata
  is language preserving. 
  Hence,  from the decidability of LTL model checking problem for initialized
  multi-rate hybrid automata, the decidability for initialized rectangular
  hybrid follows.
\end{proof}

\subsection{Piecewise-Constant Derivative Systems and Their Variants}
\label{sec:MMS}
Asarin, Maler, and Pnueli~\cite{AMP95b} initiated the study of hybrid
dynamical systems with piecewise-constant derivatives (PCD) defined as a partition of
the Euclidean space into a finite set of regions (polyhedral predicates), where
the dynamics in a region is defined by a constant rate vector. 
They defined PCD systems as completely deterministic systems where a discrete
transition occurs at region boundaries, where runs change their directions
according to the rate vector available in the new region.
Given the simplicity of such systems, it is perhaps surprising that the
reachability problem for PCD systems with three or more variables is
undecidable~\cite{AMP95b}.
In fact, Asarin and Maler~\cite{AM98} observed that, due to the capability of
such systems to perform Zeno runs, every set of arithmetical hierarchy (a
hierarchy of undecidable problems) can be recognized by a PCD system of some
finite dimension.  
On the positive side, Asarin, Maler, and Pnueli~\cite{AMP95b} gave an algorithm
to solve the reachability problem for two-dimensional PCD systems. 
Cerans and Viksna~\cite{CV95} later generalized this decidability result to more
general piecewise-Hamiltonian systems.
We also mention the work of Asarin, Schneider, Yovine~\cite{ASY01} who extended
the decidability result for two-dimensional PCD systems to a non-deterministic
setting of simple planar differential inclusion systems (SPDIs) where a number
of rate vectors are available in each region.  

Kesten, Pnueli, Sifakis, and Yovine~\cite{KPSY92} also studied another variant 
of constant-rate hybrid systems, called \emph{integration graphs}, that can be
considered as a subset of multi-rate automaton where no test of  non-clock
(integrator) variables is allowed to appear on a loop.
Kesten et al.~\cite{KPSY92} showed the decidability for the two subclasses of
integration graphs: the class with a single clock variable, and the class where
integrators are tested only once.

Recently, Bouyer et al.~\cite{BFLMS08} introduced timed automata with energy
constraints, that can be considered as multi-rate automata with a single
non-clock variable (energy variable) that does not appear on guards, and showed
decidability of schedulability problem where the energy variable is required to be
greater than a given lower-bound. 
Bouyer, Fahrenberg, Larsen, and Markey~\cite{BFLM10} later generalized this
result to give an EXPTIME algorithm for a subclass where energy variables can
grow exponentially. 

Alur, Trivedi, and Wojtczak recently studied constant-rate multi-mode
systems~\cite{ATW12}, that can be considered as multi-rate automata with the
exception that there is no structure in the automata, i.e. any mode can be used
after any other mode, and there is only a global invariant over variables. 
They showed that reachability and schedulability problems for these systems can
be solved in polynomial time for starting states strictly inside the global
invariant space.
Alur, Trivedi, and Wojtczak also showed that introducing either local invariants
or guards make the reachability problem undecidable. 
Alur et al.~\cite{AFMT13} later studied this problem on a
generalization of constant-rate multi-mode systems to bounded-rate multi-mode
system and showed the decidability of the schedulability problem. 



\section{Summary}
\label{sec:conclusion}
In this article we presented hybrid automata for modeling and formal 
verification of cyber-physical systems. 
We begin by showing how hybrid automata naturally combine features from
continuous dynamical systems and discrete finite state machines, and provide an
elegant and expressive model.
This expressiveness, however, comes with a price---the simple reachability problem
for simple subclasses of hybrid automata, like piecewise-constant derivative
systems, turned out to be highly undecidable. 
We discussed a general approach of finding finite bisimulation quotient to show
decidability  of subclasses of hybrid automata, and sketched the proof for the 
decidability for two key subclasses: timed automata and initialized rectangular
hybrid automata.  
Hybrid automata provide an intuitive and semantically unambiguous way to model
cyber-physical systems.
These formalisms provide a rich theory and a mature set of tools,
UPPAAL~\cite{TUppaal}, Kronos~\cite{TKronos}, RED~\cite{TRED},
HyTECH~\cite{THyTech}, and PHAVer~\cite{Tphaver}, able to 
perform automatic verification of systems modeled using them. 
A growing number of case-studies using these tools have shown depromise in
extending the state-of-the-art to industrial-sized examples.





\bibliographystyle{plain}
\bibliography{papers}

\end{document}
